\newcommand{\tbox}[1]{\hbox{\it #1}}
\newcommand{\tempop}[1]{\mathop{\tbox{#1}\,}}
\newcommand{\temprel}[1]{\mathrel{\tbox{#1}}}
\newcommand{\tempord}[1]{\mathord{\tbox{#1}}}
\newcommand{\ITLImplies}{\mathrel{\Rightarrow}}
\newcommand{\ITLEquiv}{\mathrel{\Leftrightarrow}}
\newcommand{\ITLSat}{\vDash}
\newcommand{\ITLdefs}{\mathrel{\triangleq}}
\newcommand{\ITLDefs}{\mathrel{\triangleq}}
\newcommand{\ITLTheorem}{\vdash\quad}
\newcommand{\ITLtheorem}{\vdash}
\newcommand{\ITLTrue}{\tempord{\sf true}}
\newcommand{\ITLFalse}{\tempord{\sf false}}
\newcommand{\ITLNot}{\neg}
\newcommand{\ITLAnd}{\mathrel{\wedge}}
\newcommand{\ITLOr}{\mathrel{\vee}}
\newcommand{\itlNextsym}{\raise 0.20em\hbox{$\scriptstyle \bigcirc\mskip -2.5mu$}}
\newcommand{\ITLNext}{\mathop{\itlNextsym}}
\newcommand{\ITLAlways}{\mathop{\Box}}
\newcommand{\ITLSometime}{\mathop{\Diamond}}
\newcommand{\ITLspot}{\mathrel{\setlength{\unitlength}{1ex}\begin{picture}(1,1)
      \put(0.5,0.6){\circle*{0.5}}\end{picture}}}
\newcommand{\itlADiamond}[1]{\mathop{\settowidth{\unitlength}{$\ITLSometime$}\begin{picture}(1,1)
  \put(0,0){\makebox(1,1)[b]{$\ITLSometime$}}\put(0,.4){\makebox(1,1)[b]{\resizebox{!}{.45\unitlength}{\makebox(1,1)[c]{#1}}}}\end{picture}}}
\newcommand{\itlABox}[1]{\mathrel{\settowidth{\unitlength}{$\ITLAlways$}\begin{picture}(1,1)
  \put(0,0){\makebox(1,1)[b]{$\ITLAlways$}}\put(0,0.01){\makebox(1,1)[b]{\resizebox{!}{.90\unitlength}{\makebox(1,1)[c]{\scriptsize
          #1}}}}\end{picture}}}
\newcommand{\itlANext}[1]{\mathrel{\settowidth{\unitlength}{$\ITLNext$}\begin{picture}(1,1)
      \put(0,0){\makebox(1,1)[b]{$\ITLNext$}}\put(0.1,0){\makebox(1,1)[c]{\resizebox{!}{.45\unitlength}{\makebox(1,1)[c]{#1}}}}\end{picture}}}
\newcommand{\ITLPrev}{\itlANext{\textbf{--}}}
\newcommand{\ITLWeakPrev}{\itlANext{{$\sim$}}}
\newcommand{\ITLWeakNext}{\itlANext{w}}
\newcommand{\ITLDa}{\itlADiamond{\textsf{a}}}
\newcommand{\ITLBa}{\itlABox{\textsf{a}}}
\newcommand{\ITLDi}{\itlADiamond{\textsf{i}}}
\newcommand{\ITLBi}{\itlABox{\textsf{i}}}
\newcommand{\ITLExists}[1]{{\exists #1}\ITLspot}
\newcommand{\ITLKeep}{\tempop{\sf keep}}
\newcommand{\ITLHalt}{\tempop{\sf halt}}
\newcommand{\ITLFin}{\tempop{\sf fin}}
\newcommand{\ITLInit}{\tempop{\sf init}}
\newcommand{\ITLInitonly}{\tempop{\sf initonly}}
\newcommand{\ITLGets}{\temprel{\sf gets}}
\newcommand{\ITLLen}{\tempop{\sf len}}
\newcommand{\ITLChop}{\mathbin{;}}
\newcommand{\ITLChopstar}{{}^{*}}
\newcommand{\ITLSkip}{\tempord{\sf skip}}
\newcommand{\ITLEmpty}{\tempord{\sf empty}}
\newcommand{\ITLMore}{\tempord{\sf more}}
\newcommand{\ITLWhile}{\tempop{\sf while}}
\newcommand{\ITLDo}{\temprel{\sf do}}
\newcommand{\ITLIf}{\tempop{\sf if}}
\newcommand{\ITLThen}{\temprel{\sf then}}
\newcommand{\ITLElse}{\temprel{\sf else}}
\newcommand{\Tassign}{\leftarrow}
\newcommand{\STrue}{\mbox{tt}}
\newcommand{\SFalse}{\mbox{ff}}
\newcommand{\SNot}{\mbox{not }}
\newcommand{\SAnd}{\mbox{ and }}
\newcommand{\SIff}{\mbox{ iff }}
\newcommand{\intlen}[1]{{|#1|}}
\newcommand{\intmap}[2]{\mathsf{map}\,\, {#1}\,\, {#2}} 
\newcommand{\Sem}[2]{\mathsf{M}\llbracket{#1}\rrbracket(#2)}
\newcommand{\ITLRev}{{}^{r}}
\newcommand{\intrev}[1]{\mathsf{rev}(#1)}
\newcommand{\eSem}[2]{\mathsf{E}\llbracket{#1}\rrbracket(#2)}
\newcommand{\zet}{{\mathchoice {\hbox{$\sf\textstyle Z\kern-0.4em Z$}}
{\hbox{$\sf\textstyle Z\kern-0.4em Z$}}
{\hbox{$\sf\scriptstyle Z\kern-0.3em Z$}}
{\hbox{$\sf\scriptscriptstyle Z\kern-0.2em Z$}}}}
\newtheorem{example}{Example} 
\newtheorem{theorem}{Theorem}
\newtheorem{proof}{Proof} 
\newtheorem{definition}{Definition}
\newtheorem{lemma}{Lemma}
\begin{document}

\title{Reversibility of
Executable \\ Interval Temporal Logic  Specifications
\thanks{Supported by DMU.}}

%\author{Antonio Cau\inst{1}\orcidID{0000-0002-3046-1217} \and Stefan Kuhn\inst{1}\orcidID{0000-0002-5990-4157} \and  James Hoey\inst{2}}
%\institute{School of Computer Science and Informatics, De %Montfort University, Leicester, UK\\ \email{\{antonio.cau, stefan.kuhn\}@dmu.ac.uk} \and
%School of Informatics,University of Leicester,Leicester, UK
%\\ 
% \email{jbh13@leicester.ac.uk }
% }

\author{\textbf{Antonio Cau} \\
Cyber Technology Institute,\\
School of Computer Science and Informatics, \\
Faculty of Computing, Engineering and Media\\
De Montfort University, Leicester, UK \\
\texttt{antonio.cau@dmu.ac.uk}
\and \textbf{Stefan Kuhn} \\
Cyber Technology Institute,\\
School of Computer Science and Informatics, \\
Faculty of Computing, Engineering and Media\\
De Montfort University, Leicester, UK \\
\texttt{stefan.kuhn@dmu.ac.uk}
\and \textbf{James Hoey}\\
School of Informatics,\\
University of Leicester, Leicester, UK \\
\texttt{jbh11@leicester.ac.uk}
}

%\author{Antonio Cau \hspace{2cm} Stefan Kuhn
%\institute{Cyber Technology Institute (CTI),\\
%School of Computer Science and Informatics, \\
%Faculty of Computing, Engineering and Media\\
%De Montfort University, \\
%Leicester, UK}
%\email{antonio.cau@dmu.ac.uk \hspace{1cm} stefan.kuhn@dmu.ac.uk}
%\and
%James Hoey
%\institute{School of Informatics,\\
%University of Leicester,\\
%Leicester, UK}
%\email{jbh11@leicester.ac.uk } 
%}

%\def\titlerunning{Reflection and reversibility}
%\def\authorrunning{A. Cau, S. Kuhn and J. Hoey}

%
%\begin{document}

\maketitle              
% typeset the header of the contribution
%
\begin{abstract}
%This paper introduces a time reversal operator for first order Interval Temporal Logic (ITL).  ITL is able to specify programming constructs like assignment, sequential composition, while loop, choice and parallel composition. Moreover, it allows reasoning about properties over time, in particular properties relevant for computer programs like safety or liveness. The time reversal operator makes sure that the last state of a reversed program sequence matches the first state in the original sequence. It therefore allows reversal of programs including assignments of shared variables. We show that local variables are needed to reverse certain programming constructs because a reversed operation can have no unique value.  We have given time reversal laws for every basic construct in ITL, for the derived constructs used for the specification of properties, and for the programming constructs. These laws have been verified in Isabelle/HOL. We also show that ITL is closed under temporal reversal, so the reversal of any ITL formula can be expressed in ITL. We demonstrate the use of the time reversal operator using a number of examples, including a parallel program with shared variables. These results will allow future work on the reversal of event-based programs and message-passing programs.

In this paper the reversibility of executable Interval Temporal Logic (ITL) specifications is investigated. ITL allows for the reasoning about systems in terms of behaviours which are represented as non-empty sequences of states. It allows for the specification of systems at different levels of abstraction. At a high level this specification is in  terms of properties, for instance safety and liveness properties. At concrete level one can specify a system in terms of programming constructs. One can execute these concrete specification, i.e., test and simulate the behaviour of the system. In this paper we will formalise this notion of executability of ITL specifications.
ITL also has a reflection operator which allows for the reasoning about reversed behaviours. We will investigate the reversibility of executable ITL specifications, i.e., how one can use this reflection operator to reverse the concrete behaviour of a particular system.

%This paper uses the time reversal operator of first order Interval Temporal Logic (ITL).  ITL is able to specify programming constructs like assignment, sequential composition, while loop, choice and parallel composition. Moreover, it allows reasoning about properties over time, in particular properties relevant for computer programs like safety or liveness. The time reversal operator makes sure that the last state of a reversed program sequence matches the first state in the original sequence. We call this process \textit{reflection}. We define the ability to execute a program in reverse as \textit{reversibility}. We give time reversal laws for every basic construct in ITL, for the derived constructs used for the specification of properties, and for the programming constructs. These laws have been verified in Isabelle/HOL. We also show that ITL is closed under temporal reflection, so the reflection of any ITL formula can be expressed in ITL. We show that local variables are needed to achieve reversibility of certain programming constructs because a reversed operation can have no unique value. We demonstrate the use of the time reversal operator using a number of examples. We also demonstrate where local variables are needed to achieve reversibility using examples. 

\keywords{Interval Temporal Logic  \and Time Reversion \and Program Reversion \and Reversible Computing.}
\end{abstract}
\section{Introduction}

Formal methods have been used in computer science to verify desirable and undesirable properties of programs. One type of formalism introduced is temporal logic. A temporal logic allows to reason about properties over time, for example ``this resource will eventually be freed''. In this paper, we are dealing with a particular temporal logic, Interval Temporal Logic (ITL).

Another strand of research is reversibility in computing. This is relevant for reversing the effects of operations, for example if, after having performed a number of operations, proceeding in the desired direction is not possible. This could be because a resource is not available or because a result is outside the allowed range of values. In such cases, a potential strategy is to roll back to a safe state and continue operation from there.

Using the ITL notation (details of which will explained in Section~\ref{sec:itl}), a program consisting of two parts could be written as $\mathsf{Good} \ITLChop \mathsf{Bad}$. The semantics (behaviour) of both parts are sequences of states. $\mathsf{Good}$ and $\mathsf{Bad}$ are arbitrary names indicating sections of the program which worked as expected respectively did not. We now want to reverse the effect of $\mathsf{Bad}$. This would require that we go back to the last state of $\mathsf{Good}$. This can be done by using an operator~$\mathsf{undo}$ so that we have  $\mathsf{Good}\ITLChop\mathsf{Bad}\ITLChop\mathsf{undo}(\mathsf{Bad})$. This operator must ensure that the last state of  $\mathsf{Good}$ is the same as the last state of  $\mathsf{Good}\ITLChop\mathsf{Bad}\ITLChop\mathsf{undo}(\mathsf{Bad})$. 

We propose a solution for this problem, where we use the reflection operator~$\ITLRev$. We show that this reverses the effects of a formula, i.e., reversing the sequence of states of that formula. We also show that it can be applied to any  formula that can be specified in ITL. This operator can be used for propositional as well as for first-order logic ITL. Therefore, we have a universal undo operation for these formulae. We distinguish \textit{reflection}, which indicates the possibility to reverse the sequences of states, from \textit{reversibility}. Reversibility indicates that an executable formula, a program, can be reversed. 
%We show that in certain cases local variables need to be introduced to enable reversibility. Using this technique, every executable ITL program is reversible.

The outline of the paper is as follows: in Section~\ref{sec:background} we discuss temporal logic in general and compare ITL with other temporal logics. We also discuss reversibility in general. In Section~\ref{sec:itl} we discuss ITL, i.e., intervals, syntax of basic and derived constructs, and the reflection operator and the semantics of these constructs. In Section~\ref{sec:execrefrev}, the notion of executability is formalised and show how this notion can be used together with reflection to reverse the effects of bad computations. 
%We show that it can be applied to each ITL formula, i.e., ITL is closed under reflection. We then introduce executability and show its relationship to reflection and reversibility. 
%We demonstrate that a particular technique, which we call the \textit{selective memory technique}, ensures that every executable program is reversible. 
%We show that every executable program can be reversed but at the expense of having to store the whole computation. We introduce the \textit{selective memory technique} to reduce the amount of computation stored. 
%We illustrate those concepts with several examples. 
In Section~\ref{sec:results} we summarize our results followed by a conclusion and future work in Section~\ref{sec:conclusion}.

%Scope
%\begin{itemize}
%\item Assume we have shared variables based programs
%\end{itemize}

%\vspace*{1cm}

%\subsection{Problem Statement}
%Given a program $\mathsf{Good} \ITLChop \mathsf{Bad}$, how to
%  ``reverse'' the effect of $\mathsf{Bad}$. 
%Introduce an operator $\ITLRev$ so that we have
%  $\mathsf{Good}\ITLChop\mathsf{Bad}\ITLChop(\mathsf{Bad})\ITLRev$. 
%Requirement for operator $\ITLRev$: the ``last state'' of
%  $\mathsf{Good}$ is the ``same'' as the last state of
%  $\mathsf{Good}\ITLChop\mathsf{Bad}\ITLChop(\mathsf{Bad})\ITLRev$.
%\subsection{Solution Outline}
%Use Interval Temporal Logic to describe our programs. 
%Use the time reversal operator $\ITLRev$ to reverse the effect
%  of $\mathsf{Bad}$.

\section{Background}
\label{sec:background}
Propositional logic deals with propositions and their connections by propositional connectives. This allows reasoning about truth of propositions. An extension of propositional logic is First-order logic, which uses quantifiers to reason about objects in a domain. These are powerful tools, allowing for example reasoning in knowledge-based systems like expert systems.

Such logic systems found applications in computer science and software engineering, mainly for specifying transformational systems. Such a system takes some input, transforms it by doing some calculation, and produces an output, after which it terminates. In contrast, reactive systems are non-terminating, consume input, and produce output continuously. To describe and analyse such a system, the temporal aspect is important, and consequently, temporal extensions of logic have been developed.

After some cursory mentions earlier, a first type of temporal logic was presented by N. Prior \cite{mates1956}. Based on this, other types of temporal logic were devised, including  (LTL) \cite{4567924,MannaPnueli1992}. 
The main operator of LTL is the until ($\mathcal{U}$) while in ITL it is the chop ($\ITLChop$). $f \mathbin{\mathcal{U}} g$  guarantees that $g$ will eventually hold at some future state and that $f$ will continue to hold until then. 
In ITL, satisfaction of formulas is defined over intervals (non-empty sequences of states) rather than time points which is used in LTL. 
$f \ITLChop g$ denotes that the interval can be split into a prefix and a suffix interval in such a way that $f$ holds for the prefix interval and $g$ holds for the suffix interval. So the chop operator corresponds to the sequential composition operator. 

Temporal variables are used e.g. in TLA~\cite{TLA}, where primed variables which denote the value of a variable in the next state. 
In ITL we have temporal variables for values of variables in the next state, the penultimate state and the final state. 
%In this paper we have added the value of variables in the penultimate state to make ITL closed under time reversal, i.e., for any ITL formula $f$ there is a formula $g$ that can express the reverse of $f$.

%ITL deals with finite state sequences, called intervals, whereas for example LTL has infinite time sequences.

Whilst time is generally considered irreversible and reversibility in the physical world is only possible at the cost of increasing entropy, there has been increasing interest and research done in reversible computation, for example demonstrated by the COST action IC1405 \cite{costbook}. We disregard the implications of reversible computing for hardware design and its potential for energy conservation here, but focus on logical reversibility. This is all types of formalisms which allow reversing steps done in order to get back to a previous state of the computation \cite{5391327}.

Any temporal logic models computations over time, and time is generally irreversible, but computations can be reversible, as we have seen. Because of this, it seems a logic extension of temporal logic to introduce a ``time reversal'', which undoes computations and therefore seemingly reverses time, whereas actual time is progressing. \cite{Moszkowski2014} introduced this for propositional ITL. In this paper, we extend this to first-order logic ITL.

Reversibility and reflection of logic is related to reversibility of programming languages. Many works have researched the process of reversing executions of traditional programming languages, most of which are typically irreversible as information is lost throughout. One approach to reversing such executions is to save this lost information as a program executes forward and later use it to reconstruct previous states (reflection). This includes the Reverse C Compiler \cite{KP2014} and the works of Hoey and Ulidowski \cite{JH2018,JH2019}. Any irreversible step of an execution is made reversible via this saved information. Execution time and memory usage are crucial aspects of these methods, with a forward execution typically being slower and memory requirements being higher as information is recorded. Such approaches including that described here minimise these overheads sufficiently. This differs from checkpointing approaches, where a snapshot of the state is taken at regular intervals and used to restore to previous positions \cite{KP2014}. Depending on the snapshot frequency, large amounts of information must be recorded and forward re-execution is sometimes required. A second approach is to use reversible languages such as Janus \cite{CL1986}, where any valid program written in such a language can be executed both forward and in reverse. This is comparable to ITL programs whose reflections are executable. Janus relies on the use of increment/decrement operators to ensure no old values of variables are lost, as well as post-conditions that allow correct expression evaluation during a reverse execution. However the challenge of converting programs of a traditional language into that of a reversible language may limit its widespread use. 

Reversibility has many interesting applications \cite{costbook}. This includes debugging \cite{JE2012,EG2014,debugCaseStudy}, where code defects can be located and fixed by executing a misbehaving program in reverse, and discrete event simulation \cite{CC1999,MS2018}, where optimistic execution can be rolled back when required. 
%background on reversing imperative programs - reverse C compiler (perumalla), Hoey. Reversible imperative languages - Janus, RWhile (Gluck). Maybe causal-consistent out of scope 

%\section{ITL overview}
%ITL is a 
%\begin{itemize}
%\item \sgr{discrete}, \sgr{linear} temporal logic
%\item for both \sgr{finite} intervals (sequences of states) which includes 
%\item a basic construct for \sgr{sequential} composition and
%\item an analogue of \sgr{Kleene star} (regular expressions)
%\end{itemize}

%Time reversal
%\begin{itemize}
%\item \sblue{Time reversal operator} introduced by Ben Moszkowski in the paper\cite{Moszkowski2014}

%  Note: defined for \sgr{propositional ITL only}
%\item Time reversal is related to \sblue{mirror images} %introduced by Arthur Prior in the book ``Past, Present and %Future'' Oxford Univ. Press, London (1967)
%\end{itemize}

%In this work 
%\begin{itemize}
%\item \sblue{Extend}  time reversal to \sblue{first order ITL}
%\end{itemize}

\section{Interval Temporal Logic}
\label{sec:itl}
Interval Temporal Logic (ITL) is a flexible notation for both propositional and first-order reasoning about periods of time found in descriptions of hardware and software systems~\cite{zhou05,Cauhomepage}. Unlike most temporal logics, ITL can handle both sequential and parallel composition and offers powerful and extensible specification and proof techniques for reasoning about properties involving safety, liveness and projected time. Timing constraints are expressible and furthermore most imperative programming constructs can be viewed as formulas in ITL. AnaTempura (available from \cite{Cauhomepage}) provides an executable framework for developing and experimenting with suitable ITL specifications.

\subsection{Interval}
\label{sec:model}
In this section we revisit the underlying semantic model of Interval Temporal Logic (albeit restricted to the finite case).

The key notion of ITL is an \textit{interval}.  An interval $\sigma$ is considered to be a non-empty, finite sequence of states $\sigma_0, \sigma_1\ldots, \sigma_n$. A state is the union of an integer state $\mathsf{State}^e$ which is a mapping from the set of integer variables $\mathsf{Var}^e$ to the set of integer values $\mathsf{Val}$, and a Boolean state $\mathsf{State}^b$ which is a mapping from the set of propositional variable $\mathsf{Var}^b$ to the set of Boolean values $\mathsf{Bool}$. Note: the embedding of ITL in Isabelle/HOL is such that one can use any definable type in Isabelle/HOL as type for an ITL variable. We have restricted the types to just integers and Boolean in this paper. Let $\Sigma^+$ denote the set of all finite intervals with at least 1 state. The length of an interval $\sigma$ is denoted by $\intlen{\sigma}$ and is the number of states minus 1, i.e., an interval with one state has length zero. Let $\sigma=\sigma_0 \sigma_1 \sigma_2 \ldots \sigma_{\intlen{\sigma}}$ be an interval then $\sigma_0\ldots \sigma_k (\mbox{where } 0\leq k \leq \intlen{\sigma})$ denotes a prefix interval of $\sigma$, $\sigma_k\ldots \sigma_{\intlen{\sigma}} (\mbox{where } 0\leq k \leq \intlen{\sigma}) $ denotes a suffix interval of $\sigma$,  $\sigma_k\ldots \sigma_l \;\; (\mbox{where } 0\leq k \leq l \leq \intlen{\sigma}) $ denotes a sub interval of $\sigma$.
 
%Intervals are described using a sequence of $\bult$'s, each $\bult$ represents a state. 
%Under the $\bult$ the values of the variables of interest are described.
%\begin{Example}\
%
%    $\begin{array}[t]{cccccc}
%	 \bult & \bult & \bult & \bult & \bult\\
%	   Q & \ITLNot Q & \phantom{\ITLNot Q} &  Q & \ITLNot Q\\ 
%       \end{array}
%       $
    
%    This is an interval with 5 states (the length is four) where in the first state $Q$ holds (the value of the propositional variable $Q$ is $\ITLTrue$) and in the second state $Q$ does not holds, etc. 
%\end{Example}

\subsection{Syntax}
\label{sec:syntax}
We first discuss the basic constructs and then introduce derived constructs. 

%\subsubsection{Basic Constructs}
We introduce the basic constructs: propositional and integer temporal variables, $\ITLTrue$, $\ITLSkip$, $\ITLChop$ (chop), $\ITLChopstar$ (chopstar) operators. 
%    This is followed by the derived constructs: various diamonds, boxes and circles operators. 
    %Temporal variables will be used to describe the various assignment constructs and the other programming constructs can be described using the various derived ITL constructs. 
 
Syntax of Integer Expressions in BNF:
\[\begin{array}{rl}
   ie ::= & z \; | \; %a \; | \;
	    ig(ie_1,\ldots,ie_n) \; | \;
   A \; | \; \ITLFin A \; | \; \ITLNext A  
   %\; | \; \ITLPrev A
\end{array}
\]
where $z$ is an integer constant, $ig$ an integer operator, and $A$ , $\ITLNext A$ and $\ITLFin A$ are temporal variables.

Syntax of Boolean Expressions in BNF:
\[\begin{array}{rl}
   be ::= & b \; | \; %q \; | \;
	    bg(be_1,\ldots,be_n) \; | \;
   Q \; | \; \ITLFin Q \; | \; \ITLNext Q 
   %\; | \;  \ITLPrev Q
\end{array}
\]
where $b$ is Boolean constant, $bg$ a Boolean operator, and $Q$, $\ITLNext Q$ and $\ITLFin Q$ are temporal variables.

Syntax of First Order Formulae in BNF:
\[\begin{array}{rl}
   f ::= & \ITLTrue \; | \; h(e_1,\ldots,e_n) \; | \; 
              \ITLNot f \; | \; f_1 \ITLAnd f_2  \; | \; \ITLExists{V} f \; | \; 
          \ITLSkip \; | \; f_1 \ITLChop f_2 \; | \; f\ITLChopstar
\end{array}
\]
where $h$ is a Boolean predicate over integer or Boolean expressions, and
$V$ is a Boolean or integer variable.

 %\subsubsection{Basic Constructs} 
 %The  formula $Q$ where $Q$ is a propositional temporal variable, denotes that the  value of $Q$ is $\ITLTrue$ in the first state of an interval. Note that the formula does not specify how many states there are in the interval, i.e., the length of the interval is unconstrained. 
 % Similarly $A=0$, where $A$ is an integer variable, would specify that the value of $A$ in the first state of the interval is $0$.  
%The formula $\ITLTrue$ denotes an interval of states of an arbitrary length (finite and contains at least 1 state). 
%    The length of the interval is unconstrained and the
%	states within the interval are also
%	unconstrained, the values the propositional and integer variables in each state is arbitrary.
The formula $\ITLSkip$ denotes any interval of exactly two states.
%One can combine $\ITLSkip$ with a Boolean operator, for example, $Q \ITLAnd \ITLSkip$, this represents  any interval of exactly two states such that 
%	in the first state $Q$ holds. 
The formula $f_1\ITLChop f_2$, where $f_1$ and $f_2$ are ITL formulae denotes an interval which is the fusion of two intervals, $f_1$ holds over the first interval and $f_2$ holds for the second interval. Fusion will concatenate two intervals in such a way that the last state of the first interval and the first state of the second interval are ``fused'' together. Fusion is only possible when these states are the same. If these states are not the same the resulting interval does not exist, i.e., is $\ITLFalse$.
%\begin{Example} $\ITLSkip \ITLChop \ITLSkip$: sequential composition of two skips. 
%      \[\begin{array}[t]{rccc}
%           \ITLSkip: &  \bult & \bult & \\
%           \ITLSkip: &         & \bult & \bult\\
%     \hline
%     \ITLSkip \ITLChop \ITLSkip: &  \bult & \bult & \bult\\ 
%	 \end{array}
%	 \]
%	 $\ITLSkip$ does not constrain the states, so one can always select for the fusion the required same state.
%     \end{Example}
The formula $f\ITLChopstar$ where $f$ is an ITL formula denotes the fusion of a finite number of intervals, where for each interval $f$ holds. Zero times fusion will result in an interval with exactly one state irrespective of $f$, i.e., $\ITLFalse\ITLChopstar$ is equal to $\ITLEmpty$. 
Temporal variables $\ITLNext V$ and $\ITLFin V$ denote the value of variables at a particular point in an interval and are used to specify assignment constructs. The temporal variable $\ITLNext A$ denotes the value of $A$ in the next state. 
%\begin{Example} $A=0 \ITLAnd \ITLSkip \ITLAnd (\ITLNext A) =1$: use of temporal variable $\ITLNext A$.
%      $\begin{array}[t]{rcc}
%	A=0: &  \bult & \ldots\\
%	            & A=0 & \\
%	\ITLSkip \ITLAnd (\ITLNext A) =1: & \bult & \bult \\
%	            &  & A=1 \\ 
%      \hline
%	A=0 \ITLAnd \ITLSkip \ITLAnd (\ITLNext A) =1: & \bult & \bult \\
%	       & A=0  & A=1 \\ 
%      \end{array}
%    $
%    %The formula $(\ITLNext A) =1 $ represents a unit assignment construct.
%  \end{Example}
The expression $\ITLFin A$ denotes 
the value of $A$ in the last state.
The formula $\ITLExists{V} f$ denotes the introduction of a local variable $V$.

%\begin{Example} $A=0 \ITLAnd (\ITLSkip\ITLChop\ITLSkip) \ITLAnd (\ITLFin A) =1$: use of temporal variable $\ITLFin A$.

%      $\begin{array}[t]{rccc}
%	A=0: & \bult & \ldots & \\
%	            & A=0 & & \\   
%	   (\ITLSkip\ITLChop\ITLSkip) \ITLAnd (\ITLFin A) =1:  & \bult & \bult & \bult\\
%	                         & &        & A=1\\
%      \hline
%	A=0 \ITLAnd (\ITLSkip\ITLChop\ITLSkip) \ITLAnd (\ITLFin A) =1: & \bult & \bult & \bult\\
%	& A=0  &  & A=1 \\ 
%      \end{array}
%    $
    
%    %The formula $ (\ITLFin A) =1$ represents a temporal assignment.
%  \end{Example}
  
\subsubsection{Derived Constructs}  
The traditional Linear Temporal Logic (LTL) operators $\ITLNext, \ITLSometime$ and $\ITLAlways$ are defined as follows:
The formula  $\ITLNext f \ITLdefs \ITLSkip\ITLChop f$ denotes that $f$ holds from the next state. Note that $\ITLNext f$ is different from temporal variable $\ITLNext V$, although the same $\ITLNext$ symbol is used, $\ITLNext f$ is using the $\ITLNext$ symbol on formula $f$ whereas $\ITLNext V$ is using the $\ITLNext$ on a variable $V$. $\ITLNext f$ itself is a formula whereas $\ITLNext V$ denotes a value. The formula $\ITLSometime f \ITLdefs\ITLTrue \ITLChop f$ (sometimes) denotes that there exists a suffix interval for which $f$ holds. The formula  $\ITLAlways f \ITLdefs \ITLNot \ITLSometime \ITLNot f$ (always) denotes that for each suffix interval $f$ holds. The formula $\ITLMore \ITLdefs \ITLNext \ITLTrue$ denotes an interval with at least two states. The formula  $\ITLEmpty \ITLdefs \ITLNot \ITLMore$ denotes an interval with only one state. Note that no interval will satisfy the formula $\ITLFalse$.  The formula  $\ITLWeakNext f \ITLdefs \ITLEmpty \ITLOr \ITLNext f$ (weak next) denotes either an interval of only one state or $f$ holds from the next state. The formula $\ITLDi f \ITLdefs f \ITLChop \ITLTrue$ (diamond-i) denotes that there exists a prefix interval for which $f$ holds. The formula $\ITLBi f \ITLdefs \ITLNot \ITLDi \ITLNot f$ (box-i) denotes that for each prefix interval $f$ holds. The formula  $ \ITLDa f \ITLdefs \ITLTrue\ITLChop f\ITLChop \ITLTrue$ (diamond-a) denotes that there exists a sub interval for which $f$ holds. The formula  $ \ITLBa f \ITLdefs \ITLNot \ITLDa \ITLNot f$ (box-a) denotes that for each sub interval $f$ holds.

\subsection{Semantics}
We now define the semantics of ITL which is a mapping from the syntactic constructs of Section~\ref{sec:syntax} and the semantic model (intervals) defined in Section~\ref{sec:model} to values (Boolean or integers). 
Let $\eSem{\ldots }{\ldots}$ be the ``meaning'' (semantic) function from $\mathsf{Expressions} \times \Sigma^+  $ to $\mathsf{Val}$ and let $\sigma=\sigma_0\sigma_1\ldots$ be an interval then the semantics of integer expressions is as follows
\[\begin{array}{lll}
\eSem{z}{\sigma} & = & z \\
%\eSem{a}{\sigma} & = & \sigma_0(a) \SAnd\\
%  & & \mbox{\sred{for all} } 0< i \leq \intlen{\sigma},
%  \sigma_i(a)=\sigma_0(a) \\
\eSem{A}{\sigma} & = & \sigma_0(A) \\
\eSem{ig(ie_1,\ldots,ie_n)}{\sigma} & = & ig(\eSem{ie_1}{\sigma},
   \ldots, \eSem{ie_n}{\sigma})  \\
\eSem{\ITLNext A}{\sigma} & = & 
     \left \{\begin{array}{ll}
       \sigma_1(A) & \mbox{if } \intlen{\sigma}>0 \\
       \mbox{choose-any-from}(\mathbb{Z}) & \mbox{ otherwise}
     \end{array}
     \right .
      \\
\eSem{\ITLFin A}{\sigma} & = & 
       \sigma_{\intlen{\sigma}}(A) 
   \\
 \end{array}
 \]
The semantics of Boolean expressions:
\[\begin{array}{lll} 
\eSem{b}{\sigma} & = & b \\
%\eSem{q}{\sigma} & = & \sigma_0(q) \SAnd \mbox{\sred{for all} } 0< i
%\leq \intlen{\sigma}, \\ 
%  & &  
%  \sigma_i(q)=\sigma_0(q) \\
   \eSem{Q}{\sigma} & = & \sigma_0(Q)\\
\eSem{bg(be_1,\ldots,be_n)}{\sigma} & = & bg(\eSem{be_1}{\sigma},
   \ldots, \eSem{be_n}{\sigma})  \\
   \eSem{\ITLNext Q}{\sigma} & = & 
     \left \{ \begin{array}{ll}
       \sigma_1(Q) & \mbox{if } \intlen{\sigma}>0 \\
       \mbox{choose-any-from}(\text{Bool}) & \mbox{ otherwise}
     \end{array}
     \right .
      \\
\eSem{\ITLFin Q}{\sigma} & = & 
       \sigma_{\intlen{\sigma}}(Q) 
   \\
 %  \eSem{\ITLPrev Q}{\sigma} & = & 
 %    \begin{array}[t]{ll}
 %      \sigma_{\intlen{\sigma}-1}(Q) & \mbox{if } \intlen{\sigma}>0 \\
 %      \mbox{choose-any-from}(\text{Bool}) & \mbox{ otherwise}
 %    \end{array}
 %  \\
\end{array}
\]
Let $\Sem{\ldots }{\ldots}$ be the ``meaning'' function from $\mathsf{Formulae} \times \Sigma^+ $ to $\mathsf{Bool}$ (set of Boolean values, $\{\STrue,\SFalse\}$)  and let $\sigma=\sigma_0\sigma_1\ldots\sigma_{\intlen{\sigma}}$ be an interval. Let $\sigma \mathrel{\sim_V} \sigma'$ denote that the intervals $\sigma$ and $\sigma'$ are identical with the possible exception of the mapping for the variable $V$. The semantics of formulae is:
\[\begin{array}{lll}
\Sem{\ITLTrue}{\sigma} & = & \STrue \\
\Sem{h(e_1,\ldots,e_n)}{\sigma}=\STrue &
       \SIff &
  h(\eSem{e_1}{\sigma}, \ldots, \eSem{e_n}{\sigma}) \\
\Sem{\ITLNot f}{\sigma} = \STrue & \SIff  &
  \SNot (\Sem{f}{\sigma} =\STrue) \\
\Sem{f_1 \ITLAnd f_2}{\sigma} =\STrue & \SIff & 
       (\Sem{f_1}{\sigma}=\STrue) \SAnd 
       (\Sem{f_2}{\sigma}=\STrue) \\
\Sem{\ITLSkip}{\sigma} = \STrue & \SIff & \intlen{\sigma}=1\\
\Sem{\ITLExists{V} f}{\sigma}=\STrue  &\SIff & 
     (\mbox{exists } \sigma' \mbox{ s.t. } 
       \sigma \mathrel{\sim_V} \sigma' \SAnd  
       \Sem{f}{\sigma'}=\STrue)\\
       \Sem{f_1 \ITLChop f_2}{\sigma} =
\STrue & \SIff & (\mbox{exists } k, \mbox{ s.t. } \Sem{f_1}{\sigma_0\ldots\sigma_k} =
  \STrue \SAnd \\
  & & \,\,\Sem{f_2}{\sigma_k\ldots\sigma_{\intlen{\sigma}}} =
  \STrue) \\
  \Sem{f\ITLChopstar}{\sigma} =\STrue & \SIff &
  (\mbox{exist }l_0, \ldots, l_n \mbox{ s.t. } 
        l_0=0 \SAnd l_n = \intlen{\sigma} \SAnd \\
       & & \,\,\mbox{for all } 0\leq i <n, l_i < l_{i+1}
       \SAnd \\
       & & \,\,\,\,\Sem{f}{\sigma_{l_i}\ldots\sigma_{l_{i+1}}} = \STrue)\\
\end{array}
\]
%The semantics of `chop' is illustrated as follows:
%\[
%\begin{array}{l}
%  \begin{array}{c@{}c@{}c@{}c@{}c}
%          | & \leftarrow f_1\rightarrow  & | &
%	  \leftarrow f_2 \rightarrow & |\\
%          \sigma_0 & & \sigma_k & &
%	  \sigma_{\intlen{\sigma}} \\
%          \bult &  \ldots
%	  &\bult & \ldots & \bult \\
%\end{array}
%\end{array}
%\]
%Interval $\sigma$ is a fusion of two intervals
%$\sigma_0\ldots\sigma_k$ (satisfies $f_1$) and
%$\sigma_k\ldots\sigma_{\intlen{\sigma}}$ (satisfies $f_2$). State
%$\sigma_k$ is shared by both.
%The semantics of `chopstar' is illustrated as follows
% \[\begin{array}{l}
%  \begin{array}{c@{}c@{}c@{}c@{}c@{}c@{}c@{}c@{}c@{}c@{}c}
%          | & \leftarrow f\rightarrow  & | & 
%          \ldots &
%          | & \leftarrow f\rightarrow  & | &
%          \ldots &
%	  | & \leftarrow f \rightarrow & |\\
%          \sigma_0 & & \sigma_{l_1} & & 
%          \sigma_{l_i} & & \sigma_{l_{i+1}} & & 
%          \sigma_{l_{n-1}} & & \sigma_{\intlen{\sigma}} \\
%          \bult &  \ldots & \bult & \ldots & 
%          \bult &  \ldots & \bult & \ldots & 
%          \bult &  \ldots & \bult \\
%\end{array}
%\end{array}
%\]
%Interval $\sigma$ is the fusion of a finite number of 
%sub-intervals each satisfying $f$.
A first order ITL formula $f$ is satisfiable denoted by $\ITLSat f$ if and only if there exists an interval $\sigma$ such that $\Sem{f}{\sigma} = \STrue$. A  first order ITL formula $f$ is valid denoted by $ \ITLtheorem f$ if and only if for all intervals $\sigma$, $\Sem{f}{\sigma} = \STrue$.
%  We use $\ITLtheorem f$ to denote that $f$ is valid.
%}

\subsection{Reflection}
\label{sec:reflection}

We now discuss the notion of temporal reflection for ITL formulae as defined in \cite{Moszkowski2014}. We first discuss the semantic notion of the reverse of a sequence of states and then discuss the reflection operator and its corresponding semantics. 

Let $f$ be a formula, $e$ an expression, and $\sigma$ be an interval $\sigma_0 \ldots \sigma_{\intlen{\sigma}}$ then 
$\intrev{\sigma}$ denotes interval reversal and is defined as
\[
\mathsf{rev}(\sigma) \ITLDefs \sigma_{\intlen{\sigma}} \ldots \sigma_0.\]
$f\ITLRev$ denotes temporal reflection of formula $f$ and is defined as 
\[\Sem{f\ITLRev}{\sigma} \ITLDefs \Sem{f}{\intrev{\sigma}}.
\]
$e\ITLRev$ denotes temporal reflection of expression $e$ and is defined as 
  \[\eSem{e\ITLRev}{\sigma} \ITLDefs \eSem{e}{\intrev{\sigma}}.\]
We show that for any basic operator in ITL its reflection $f\ITLRev$ is in ITL. This implies that  ITL is closed under reflection.

From \cite{Moszkowski2014} we have the following reflection laws:
\[ \begin{array}{ll|ll}
  R_0 & \quad \ITLTheorem \ITLTrue\ITLRev \ITLEquiv \ITLTrue \quad & R_1 & \quad \ITLTheorem (\ITLNot f)\ITLRev \ITLEquiv \ITLNot (f\ITLRev)\\
  R_2 & \quad \ITLTheorem (f_1 \ITLAnd f_2)\ITLRev \ITLEquiv (f_1\ITLRev) \ITLAnd (f_2\ITLRev) \quad & 
  R_3 & \quad \ITLTheorem \ITLSkip\ITLRev \ITLEquiv \ITLSkip\\
  R_4 & \quad \ITLTheorem (f_1\ITLChop f_2)\ITLRev \ITLEquiv (f_2\ITLRev\ITLChop f_1\ITLRev) \quad &
  R_5 & \quad \ITLTheorem (f\ITLChopstar)\ITLRev \ITLEquiv (f\ITLRev)\ITLChopstar\\
  \end{array}
\]
So we need similar laws for $h(e_1,\ldots,e_n)$ and $(\ITLExists{V} f)$, which we introduce in this paper. These are:
\[ \begin{array}{ll|ll}
R_6 & \quad \ITLTheorem (h(e_1,\ldots,e_n))\ITLRev \ITLEquiv h(e_1\ITLRev,\ldots, e_n\ITLRev) \quad & R_7 & \quad 
\ITLTheorem (\ITLExists{V} f)\ITLRev \ITLEquiv \ITLExists{V} f\ITLRev\\
\end{array}
\]
%Note: All reflection laws  given in this paper have been verified in Isabelle/HOL~\cite{Isabelle} and the ITL library is available from~\cite{Cauhomepage}.

We also need to show that expressions can be reflected, i.e., we need to investigate $e_i\ITLRev$. Note: in \cite{Moszkowski2014} only propositional variables have been considered and no notion of temporal variables has been defined. In this paper we define the reflection for temporal variables. 

For expressions we have the following reflection laws 
\[
\begin{array}{ll|ll}
\mathit{ER}_0 & \quad \ITLTheorem c\ITLRev = c \quad &
\mathit{ER}_1 & \quad \ITLTheorem V\ITLRev = \ITLFin V\\
\mathit{ER}_2 & \quad \ITLTheorem (\ITLFin V)\ITLRev = V \quad & 
\mathit{ER}_3 & \quad \ITLTheorem (g(e_1,\ldots,e_n))\ITLRev = g(e_1\ITLRev,\ldots, e_n\ITLRev)\\
\end{array}
\]
where $c$ is a constant value of a particular type and $V$ is a variable of a particular type. 

For the temporal variable $\ITLNext V$ we need to introduce a new construct to the syntax of expressions that will serve as a reflected version of $\ITLNext V$.
The temporal variable  $\ITLPrev V$ denotes
the value of $V$ in the pen-ultimate (previous) state.
%The following example illustrates its use.
%\begin{Example}
%$A=0 \ITLAnd (\ITLSkip\ITLChop\ITLSkip) \ITLAnd (\ITLPrev A) =1$
%      \[\begin{array}[t]{rccc}
%	A=0: & \bult & \ldots & \\
%	            & A=0 & & \\   
%	 (\ITLSkip\ITLChop\ITLSkip) \ITLAnd (\ITLPrev A) =1:  & \bult & \bult & \bult\\
%	                         & &      A=1  & \\
%      \hline
%	A=0 \ITLAnd (\ITLSkip\ITLChop\ITLSkip) \ITLAnd (\ITLPrev A) =1: & \bult & \bult & \bult\\
%	& A=0  &  A=1 &  \\ 
%      \end{array}
%    \]
%  \end{Example}
 The formal semantics is as follows
 \[ 
 \begin{array}{lll}
\eSem{\ITLPrev Q}{\sigma} & = & 
     \left\{\begin{array}{ll}
       \sigma_{\intlen{\sigma}-1}(Q) & \mbox{if } \intlen{\sigma}>0 \\
       \mbox{choose-any-from}(\mathsf{Bool}) & \mbox{ otherwise}
     \end{array}
     \right .\\
\eSem{\ITLPrev A}{\sigma} & = & 
     \left\{\begin{array}{ll}
       \sigma_{\intlen{\sigma}-1}(A) & \mbox{if } \intlen{\sigma}>0 \\
       \mbox{choose-any-from}(\mathbb{Z}) & \mbox{ otherwise}
     \end{array}
     \right .
     \end{array}
\]
The relationship between $(\ITLNext V)$ and $\ITLPrev V$ is captured by the following laws. 
\[\begin{array}{ll|ll}
  \mathit{ER}_4 & \quad\ITLTheorem (\ITLNext V)\ITLRev = \ITLPrev V \quad &
  \mathit{ER}_5 & \quad \ITLTheorem (\ITLPrev V)\ITLRev = \ITLNext V\\
\end{array}
\]
This leads to the following theorem
\begin{theorem}
ITL (extended with $\ITLPrev V$) is closed under reflection.
\end{theorem}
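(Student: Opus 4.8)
The plan is to prove the statement by a structural induction over the grammar of ITL formulae together with a mutual induction over integer and Boolean expressions, using the reflection laws $R_0$--$R_7$ and $\mathit{ER}_0$--$\mathit{ER}_5$ as the inductive steps. Here ``closed under reflection'' means that for every formula $f$ (respectively expression $e$) of the grammar of Section~\ref{sec:syntax} extended with $\ITLPrev V$, there is a formula (respectively expression) of the \emph{same} extended language that is semantically equivalent to $f\ITLRev$ (respectively $e\ITLRev$). Since each law rewrites the reflection of a compound construct into a construct of the same shape applied to the reflections of its immediate subparts, the laws constitute a terminating rewriting that pushes $\cdot\ITLRev$ inward to the leaves and eliminates it there; a single induction therefore suffices, with the induction hypothesis supplying that each $e_i\ITLRev$ or $f_i\ITLRev$ is already a formula of the extended language and the matching law exhibiting $f\ITLRev$ as a legal formula of that language.

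First I would establish the two indexing facts that drive every expression law, namely that $\mathsf{rev}$ is an involution, $\intrev{\intrev{\sigma}}=\sigma$, and that $(\intrev{\sigma})_i=\sigma_{\intlen{\sigma}-i}$ for $0\le i\le\intlen{\sigma}$. With these, the base expression laws are immediate unfoldings of the semantic definition $\eSem{e\ITLRev}{\sigma}=\eSem{e}{\intrev{\sigma}}$: for $\mathit{ER}_1$ one reads off $(\intrev{\sigma})_0=\sigma_{\intlen{\sigma}}$, giving $V\ITLRev=\ITLFin V$; for $\mathit{ER}_2$ one uses $(\intrev{\sigma})_{\intlen{\sigma}}=\sigma_0$; and for $\mathit{ER}_4$ and $\mathit{ER}_5$ one uses $(\intrev{\sigma})_1=\sigma_{\intlen{\sigma}-1}$ to match $\ITLNext V$ against $\ITLPrev V$, handling the length-zero case through the shared choose-any clause. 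Laws $\mathit{ER}_0$ and $\mathit{ER}_3$ follow because constants ignore the interval and operators distribute through the meaning function. The crucial observation here is that $\mathit{ER}_4$ is exactly what forces the extension: $(\ITLNext V)\ITLRev$ is not expressible by any of the original expression constructs, and $\ITLPrev V$ is introduced precisely to name it, so the expression language becomes closed only once $\ITLPrev V$ is added.

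For the formula laws I would verify $R_6$ and $R_7$, taking $R_0$--$R_5$ as established in \cite{Moszkowski2014}. Law $R_6$ unfolds directly: $\Sem{(h(e_1,\ldots,e_n))\ITLRev}{\sigma}=\Sem{h(e_1,\ldots,e_n)}{\intrev{\sigma}}$ reduces to the predicate applied to the values $\eSem{e_i}{\intrev{\sigma}}=\eSem{e_i\ITLRev}{\sigma}$, which is exactly $\Sem{h(e_1\ITLRev,\ldots,e_n\ITLRev)}{\sigma}$. I expect $R_7$ to be the main obstacle, because it involves the local-variable semantics and the relation $\sim_V$. The key lemma to isolate is that reversal commutes with this relation: $\intrev{\sigma}\mathrel{\sim_V}\sigma'$ holds if and only if $\sigma\mathrel{\sim_V}\intrev{\sigma'}$. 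This follows because $\mathsf{rev}$ permutes state positions without altering the contents of any individual state, so the property ``identical except possibly at $V$'' is preserved position-by-position and lengths are preserved. Given the lemma, a witness $\sigma'$ for $\Sem{\ITLExists{V}f}{\intrev{\sigma}}=\STrue$ corresponds, via $\sigma''=\intrev{\sigma'}$ and the involution property, to a witness $\sigma''$ with $\sigma\mathrel{\sim_V}\sigma''$ and $\Sem{f}{\intrev{\sigma''}}=\Sem{f\ITLRev}{\sigma''}=\STrue$, which is precisely $\Sem{\ITLExists{V}f\ITLRev}{\sigma}=\STrue$, establishing $R_7$.

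Finally I would assemble the induction. The base cases $\ITLTrue$ (by $R_0$), $\ITLSkip$ (by $R_3$), and the atomic expressions (by $\mathit{ER}_0$--$\mathit{ER}_2$, $\mathit{ER}_4$, $\mathit{ER}_5$) lie in the extended language outright. For each compound case --- $\ITLNot f$, $f_1\ITLAnd f_2$, $f_1\ITLChop f_2$, $f\ITLChopstar$, $h(e_1,\ldots,e_n)$, $\ITLExists{V}f$, and the compound expression $g(e_1,\ldots,e_n)$ --- the matching law rewrites the reflection into a construct of the extended language whose immediate subparts are reflections covered by the induction hypothesis. Since every grammar production is handled by exactly one law and no law leaves the extended language, the induction closes and the theorem follows. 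The only genuinely delicate points I anticipate are the $\sim_V$ commutation lemma underpinning $R_7$ and the consistent treatment of the nondeterministic choose-any clause in the length-zero case of $\mathit{ER}_4$ and $\mathit{ER}_5$.
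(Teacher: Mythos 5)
Your proposal follows exactly the paper's own argument: a structural induction on the syntax of ITL formulae and expressions, discharging each case with the reflection laws $R_0$--$R_7$ and $\mathit{ER}_0$--$\mathit{ER}_5$. The paper states only this one-line strategy, and your elaboration (the involution and indexing facts, the $\sim_V$ commutation lemma for $R_7$, the role of $\ITLPrev V$ in closing the expression language) is a correct and faithful filling-in of the details it leaves implicit.
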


\begin{proof}
The proof of this is done using structural induction on the syntax of ITL. For ITL formulae we use laws $\mathit{R}_0$ -- $\mathit{R}_7$ and for expressions we use laws $\mathit{ER}_0$ -- $\mathit{ER}_5$.
\end{proof}

%\subsubsection{General temporal reflection laws}
%
%The following general laws hold for the reflection operator.
%\[\begin{array}{l|l}
%  \ITLTheorem (f\ITLRev)\ITLRev \ITLEquiv f \quad & \quad 
%  (\ITLTheorem f) \quad \equiv\quad  (\ITLTheorem f\ITLRev)\\
%\end{array}
%\]
%The first law states that performing reflection twice cancel each other out.
%The second law is important because it states that for any valid ITL theorem, its reflected version is also a theorem.

\subsubsection{Temporal reflection laws for derived operators}
We first define the following derived  constructs:
The formula  $\ITLPrev f \ITLdefs f\ITLChop\ITLSkip$ denotes that $f$ holds  previously. The formula $\ITLWeakPrev f \ITLdefs \ITLEmpty \ITLOr \ITLPrev f$  (weak previously) denotes
either an interval of only one state or $f$ holds  previously.
\[\begin{array}{l|l}
  \ITLTheorem \ITLEmpty\ITLRev \ITLEquiv \ITLEmpty \quad & \quad 
  \ITLTheorem \ITLMore\ITLRev \ITLEquiv \ITLMore\\
  \ITLTheorem (\ITLNext f)\ITLRev \ITLEquiv \ITLPrev (f\ITLRev) \quad & \quad 
  \ITLTheorem (\ITLPrev f)\ITLRev \ITLEquiv \ITLNext (f\ITLRev)\\
  \ITLTheorem (\ITLWeakNext f)\ITLRev \ITLEquiv \ITLWeakPrev (f\ITLRev) \quad & \quad
  \ITLTheorem (\ITLWeakPrev f)\ITLRev \ITLEquiv \ITLWeakNext (f\ITLRev)\\
  \ITLTheorem (\ITLSometime f)\ITLRev \ITLEquiv \ITLDi (f\ITLRev) \quad & \quad 
  \ITLTheorem (\ITLDi f)\ITLRev \ITLEquiv \ITLSometime (f\ITLRev)\\
  \ITLTheorem (\ITLAlways f)\ITLRev \ITLEquiv \ITLBi (f\ITLRev) \quad & \quad 
  \ITLTheorem (\ITLBi f)\ITLRev \ITLEquiv \ITLAlways (f\ITLRev)\\
  \ITLTheorem (\ITLDa f)\ITLRev \ITLEquiv \ITLDa (f\ITLRev) \quad & \quad 
  \ITLTheorem (\ITLBa f)\ITLRev \ITLEquiv \ITLBa (f\ITLRev)\\
  \end{array}
  \]
  Observe that the $\ITLSometime$ ($\ITLAlways$) and $\ITLDi$ ($\ITLBi$) are dual wrt to reflection, i.e., the reflection operator relates prefix intervals with suffix intervals and vice versa. 
  Similarly we  have that $\ITLNext$ ($\ITLWeakNext$) and $\ITLPrev$ ($\ITLWeakPrev$) are dual wrt to reflection. 
  
  %A temporal operator $op$ is \textit{insensitive} to reflection if and only if $op(f)\ITLRev = op (f\ITLRev) $ for $f$ a formula.
  %We have that $\ITLDa$ ($\ITLBa$) and $\ITLEmpty$ ($\ITLMore$)  are insensitive to reflection. 

\section{Executability, reflection and reversibility}
\label{sec:execrefrev}
In this section we will discuss the notion of executability. It is used to determine whether an ITL formula represents a programming construct. 
We first formalise the notion of forward executability of a formula which corresponds to generating a sequence of states in a particular fashion: we first generate the first state and then generate the next until the final state is generated. This sequence of state constitutes the behaviour of the system described by the formula. We then investigate the reflection of forward executable formula and this requires the introduction of the notion of backward executability. This notion corresponds to generating a sequence of states but now we first generate the final state  and then generate the previous state until we generate  the first state. This sequence corresponds to the reversed behaviour of the system described by the formula. Forward and backward executability are related by the reflection operator.
%Furthermore, we will show 
%that the reflection of an executable formula is not always executable. 
%that the reflection of an executable formula is not always
%executable. We will show techniques of changing those non-executable formulae into executable ones. 

\subsection{Forward executability}
\label{subsec:exec}
The intuition of an executable formula (specification) is that it corresponds to a computation, i.e., in our case a sequence of states. Obviously any executable formula needs to be satisfiable. But not every satisfiable formula is executable because we further require it to be ``deterministic''. We will give a formal definition what we mean by this. The executable formula corresponds to programming constructs and 
%An ITL formula is executable if it is ''deterministic'' and satisfiable. One can define programming constructs in ITL and these constructs will be used to define executable formulae.  
%The executable subset of ITL is called Tempura \cite{tempurabook}.
%Tempura has both logical and imperative programming constructs like assignments and iteration. 
in Table~\ref{tab:procon} some of these programming constructs are defined. Note: $\ITLFin f$ is different from $\ITLFin V$, the first one is a formula whereas the latter denotes an expression, i.e., is a value. So $\ITLFin$ is overloaded for formulae and expressions. 

\begin{table}[htb]
    \centering
    $
    \begin{array}{lll}
       A = e  &  & \mbox{assignment} \\
       A:=e   & (\ITLNext A) = e &  \mbox{unit assignment} \\
       A\Tassign e & (\ITLFin A) = e & \mbox{temporal assignment} \\
       A \ITLGets e & \ITLBa ( \ITLSkip \ITLImplies A   \Tassign e) & \mbox{gets assignment} \\
       \ITLIf f_0 \ITLThen f_1 \ITLElse f_2 & (f_0 \ITLAnd f_1) \ITLOr (\ITLNot f_0 \ITLAnd
f_2)& \mbox{binary choice} \\
       \ITLLen(0) & \ITLEmpty & \\
       \ITLLen(n+1) & \ITLSkip\ITLChop\ITLLen(n) & \mbox{length of an interval} \\
       \ITLFin f & \ITLAlways (\ITLEmpty \ITLImplies f) & \mbox{in final state $f$ holds} \\
       \ITLInit f & \ITLBi (\ITLEmpty \ITLImplies f) & \mbox{initially $f$ holds} \\
       \ITLHalt f & \ITLAlways (\ITLEmpty \ITLEquiv f)& \mbox{halt when $f$ holds} \\
       \ITLKeep f & \ITLBa (\ITLSkip \ITLImplies f)& \mbox{for all unit intervals $f$ holds} \\
       \ITLWhile f_0 \ITLDo f_1 & (f_0 \ITLAnd f_1)\ITLChopstar \ITLAnd \ITLFin \ITLNot f_0&
        \mbox{while loop} \\
        \ITLExists{V} f & & \mbox{local variable introduction}\\
    \end{array}
    $
    \caption{Tempura constructs}
    \label{tab:procon}
\end{table}

The following definitions are used to determine whether a formula is executable or not. First we define the notion of a value trace of a formula wrt a list of variables. These variables are the ''free variables'' appearing in $f$, i.e., $f$ constrains the values of these variables. Note: 
These definitions and all subsequent theorems have been specified and verified in the Isabelle/HOL \cite{Isabelle} system (library available from \cite{Cauhomepage}).
\begin{definition}
\ 
\begin{itemize}
\item Let $\mathbf{s}$ be a state and let $\overline{\mathsf{v}}$ denote a non-empty list of variables $\mathsf{v}_0,\ldots,\mathsf{v}_n$ and let $\overline{\mathbf{s}(\mathsf{v})}$ denote the corresponding list of values $\mathbf{s}(\mathsf{v}_0), \ldots, \mathbf{s}(\mathsf{v}_n)$ of $\overline{\mathsf{v}}$ in state $\mathbf{s}$. 

\item Let $\mathsf{Spec}$ be a formula and $\sigma$ be an interval and $\Sem{\mathsf{Spec}}{\sigma} = \STrue$ then the value trace of $\mathsf{Spec}$ wrt $\overline{\mathsf{v}}$ is denoted by $\intmap{(\lambda \mathbf{s}. \overline{\mathbf{s}(\mathsf{v})})}{\sigma}$ and defined as
$\overline{\sigma_0(\mathsf{v})}\,\overline{\sigma_1(\mathsf{v})}\ldots \overline{\sigma_{\intlen{\sigma}}(\mathsf{v})}$. 
\end{itemize}
\end{definition}

\begin{example}
  The value trace for $A=0 \ITLAnd A \ITLGets A + 1 \ITLAnd \ITLAlways (B= A*2)$ wrt $(A,B)$ is $(0,0)\ (1,2)\ (2,4)\ (3,6)\ (4,8)\ \ldots$  and it represents how $A$ and $B$ change, $A$ is increased by one and $B$ equals twice $A$ in every state.
\end{example}
The following definition is a constraint on the intervals which satisfy a formula. Only intervals that share a common prefix of the value trace are allowed, 
\begin{definition}\ \\
A formula $\mathsf{Spec}$ has a common prefix value trace wrt a list of variables $\overline{\mathsf{v}}$ denoted by $\ddagger[\mathsf{Spec}]_{\overline{\mathsf{v}}} $  if and only if
for all intervals $\sigma$ and $\sigma'$ if 
$\Sem{\mathsf{Spec}}{\sigma} = \STrue $ and $\Sem{\mathsf{Spec}}{\sigma'} = \STrue$ and $\intlen{\sigma} \leq \intlen{\sigma'}$ then

$(\intmap{(\lambda \mathbf{s}. \overline{\mathbf{s}(\mathsf{v})}) }{ \sigma)} = (\intmap{ (\lambda \mathbf{s}. \overline{\mathbf{s}(\mathsf{v})})}{ (\sigma'_0\ldots\sigma'_{\intlen{\sigma}}))} $.
\end{definition}
In above definition we compare the value trace corresponding to $\sigma$ with the prefix (of length $\intlen{\sigma}$) of the value trace of corresponding to $\sigma'$. The intuition is that the latter is a continuation of the first,i.e.,
the first value trace is a ``beginning'' of the latter value trace.  
The following example illustrates this notion.
\begin{example}\label{ex:commonpfx}
The following are some formula that have a common prefix value trace.
\begin{itemize}
    \item $\ddagger[A=0 \ITLAnd \ITLEmpty]_A$, there is only one possible value trace $0$.
        
   \item $\ddagger[A=0 \ITLAnd A \ITLGets A+1]_A$,
     the possible value traces are
     \[\begin{array}{lll}
       0 \\ 
       0, & 1 \\
       0, & 1, & 2 \\
        \ldots \\
       \end{array}
       \] 
       Each pair of value traces share a common prefix.
       The common prefix value trace of pair $0$ and $0,1$ is $0$ and  of pair $0,1$ and $0,1,2$ is $0,1$. Note that in the latter pair there is another shared prefix $0$ but in the definition it states that we are looking for a prefix that has a length equal to the ``smallest'' of the two. 
       Note we align on the left. 
\end{itemize}    
The following are some formula that have no common prefix value trace. 
\begin{itemize}
    \item $\SNot\ddagger[(A=0 \ITLOr A=1) \ITLAnd \ITLEmpty]_A$, we have two value traces $0$ and $1$, but they do not share a common prefix.
   \item $\SNot\ddagger[A=0 \ITLAnd \ITLSkip ]_A$, we have for instance value traces $0,0$ and $0,1$ but when their length are the same they ought to agree on all values and this does not hold as they disagree in the second state.
   \item $\SNot\ddagger[\ITLSkip]_A $, $A$ does not appear in the formula so values of $A$ are not constrained at all, one has value trace $0,0$ and $1, 0$ and these do not share a common prefix.    
\end{itemize}
\end{example}
The following theorem states that the combination of satisfiability with the notion of common prefix value trace can be used to determine whether a formula is executable or not, i.e., satisfiable and deterministic. 
\begin{theorem}
Let $\mathsf{Spec}$ be a formula and $\overline{\mathsf{v}}$ be a list of variables. 

If $\ITLSat \mathsf{Spec}$   and $\ddagger[\mathsf{Spec}]_{\overline{\mathsf{v}}}$ then
 for all $k \geq 0$
 
$\#\{ (\intmap{(\lambda \mathbf{s}. \overline{\mathbf{s}(\mathsf{v})})}{\sigma})\,  |\, \Sem{\mathsf{Spec}}{\sigma} = \STrue \SAnd \intlen{\sigma} = k  \} \leq 1 $.
\end{theorem}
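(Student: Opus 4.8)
The plan is to prove the cardinality bound by a direct unfolding of the definition of $\ddagger[\mathsf{Spec}]_{\overline{\mathsf{v}}}$, showing that any two value traces arising from satisfying intervals of the \emph{same} length must actually be identical. I would fix an arbitrary $k \geq 0$ and argue that the displayed set cannot contain two distinct elements: it suffices to take two satisfying intervals of length $k$ and show their value traces coincide, since then every element of the set is the same trace, giving $\# \leq 1$.

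First I would take arbitrary intervals $\sigma$ and $\sigma'$ with $\Sem{\mathsf{Spec}}{\sigma} = \STrue$, $\Sem{\mathsf{Spec}}{\sigma'} = \STrue$, and $\intlen{\sigma} = \intlen{\sigma'} = k$. Because $\intlen{\sigma} \leq \intlen{\sigma'}$ holds (here with equality), the hypothesis $\ddagger[\mathsf{Spec}]_{\overline{\mathsf{v}}}$ applies and yields $\intmap{(\lambda \mathbf{s}. \overline{\mathbf{s}(\mathsf{v})})}{\sigma} = \intmap{(\lambda \mathbf{s}. \overline{\mathbf{s}(\mathsf{v})})}{(\sigma'_0 \ldots \sigma'_{\intlen{\sigma}})}$. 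The crucial observation is that, since $\intlen{\sigma} = \intlen{\sigma'} = k$, the prefix $\sigma'_0 \ldots \sigma'_{\intlen{\sigma}} = \sigma'_0 \ldots \sigma'_k$ is the whole of $\sigma'$, so the right-hand side is exactly the value trace of $\sigma'$. Hence the value traces of $\sigma$ and $\sigma'$ are equal, and the two arbitrary elements of the set collapse to one, establishing the bound.

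I expect no genuine obstacle: the argument is essentially a definitional unfolding, and the only step that deserves explicit comment is the observation that equal interval lengths turn the one-sided ``common prefix'' comparison of the definition into a full equality of value traces. It is also worth remarking that the satisfiability hypothesis $\ITLSat \mathsf{Spec}$ is not strictly needed for the inequality itself (an empty set trivially satisfies $\# \leq 1$); rather, $\ITLSat \mathsf{Spec}$ guarantees that the count is exactly $1$ for those lengths $k$ that are actually realised by some satisfying interval, which together with this bound is what captures the intended ``satisfiable and deterministic'' reading of executability.
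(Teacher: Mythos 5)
Your proof is correct: instantiating the common-prefix condition with two satisfying intervals of equal length $k$ makes the prefix $\sigma'_0\ldots\sigma'_{\intlen{\sigma}}$ the whole of $\sigma'$, so all value traces in the set coincide and the cardinality is at most $1$. The paper gives no written proof for this theorem (it defers to the Isabelle/HOL formalisation), but your definitional-unfolding argument is the evident one, and your side remark that $\ITLSat \mathsf{Spec}$ is not needed for the inequality itself is also accurate.
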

In above theorem we have that all satisfying intervals of length $k$ will corresponds to at most one value trace. 

The notion of common prefix value trace corresponds to the notion of generating a satisfying interval for a formula but it ''limits'' how  this is achieved, i.e., one proceeds in a forward manner by extending at the right and therefore  no  backtracking will be used. The following definition introduces the notion of forward executability. 
\begin{definition}
Let $\mathsf{Spec}$ be a formula and $\overline{\mathsf{v}}$ a list of variables. 

$\mathsf{Spec}$ is forward executable wrt $\overline{\mathsf{v}}$ denoted by $\dagger[\mathsf{Spec}]_{\overline{\mathsf{v}}}$ if and only if

$\ITLSat \mathsf{Spec}$ and $\ddagger[\mathsf{Spec}]_{\overline{\mathsf{v}}}$.
\end{definition}
In Tempura \cite{tempurabook}, the executable subset of ITL, a formula $\mathsf{Spec}$ is rewritten into a normal form $\ITLInit \mathsf{w}_0 \ITLAnd \ITLWeakNext \mathsf{Spec_0}$.
The $\ITLInit \mathsf{w}_0$ represents the initial state and $\ITLWeakNext \mathsf{Spec_0}$ represents the behaviour of the system from the next state onward but only if there is a next state. This process is repeated for formula $\mathsf{Spec_0}$, i.e., it is rewritten to
$\ITLInit \mathsf{w}_1  \ITLAnd \ITLWeakNext \mathsf{Spec_1}$. 
This process of rewriting into normal form corresponds to our notion of forward executability. This is expressed in the following theorem.
\begin{theorem}\label{th:fexec}
Given formulae $\mathsf{w}$ and $\mathsf{Spec}$
%$\ITLInit \mathsf{w} \ITLAnd \ITLWeakNext \mathsf{Spec}$ 
and a list of variables $\overline{\mathsf{v}}$.

If $\dagger[\ITLInit \mathsf{w} \ITLAnd \ITLEmpty]_{\overline{\mathsf{v}}}$ and $\dagger[\mathsf{Spec}]_{\overline{\mathsf{v}}}$ then 
$\dagger[\ITLInit \mathsf{w} \ITLAnd \ITLWeakNext\mathsf{Spec}]_{\overline{\mathsf{v}}}$.
\end{theorem}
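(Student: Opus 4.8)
The plan is to establish the two conjuncts of forward executability separately, since by definition $\dagger[\ITLInit \mathsf{w} \ITLAnd \ITLWeakNext\mathsf{Spec}]_{\overline{\mathsf{v}}}$ amounts to $\ITLSat (\ITLInit \mathsf{w} \ITLAnd \ITLWeakNext\mathsf{Spec})$ together with the common-prefix-value-trace property $\ddagger[\ITLInit \mathsf{w} \ITLAnd \ITLWeakNext\mathsf{Spec}]_{\overline{\mathsf{v}}}$. For satisfiability I would unfold $\ITLWeakNext\mathsf{Spec} \ITLdefs \ITLEmpty \ITLOr \ITLNext\mathsf{Spec}$ and note that any one-state interval makes the $\ITLEmpty$ disjunct true. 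The hypothesis $\dagger[\ITLInit \mathsf{w} \ITLAnd \ITLEmpty]_{\overline{\mathsf{v}}}$ supplies $\ITLSat (\ITLInit \mathsf{w} \ITLAnd \ITLEmpty)$, hence a one-state witness $\sigma$ on which $\mathsf{w}$ holds in the first state; this same $\sigma$ satisfies $\ITLWeakNext\mathsf{Spec}$ (via $\ITLEmpty$) and $\ITLInit \mathsf{w}$, so it witnesses satisfiability of the combined formula.

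The core of the argument is the common-prefix property, and the first step is to obtain a clean semantic characterisation by unfolding the derived operators. I would show $\Sem{\ITLInit \mathsf{w}}{\sigma} = \Sem{\mathsf{w}}{\langle \sigma_0 \rangle}$, so that $\ITLInit \mathsf{w}$ constrains only the first state, and that $\ITLNext\mathsf{Spec} \ITLdefs \ITLSkip \ITLChop \mathsf{Spec}$ holds on $\sigma$ exactly when $\intlen{\sigma} \geq 1$ and $\Sem{\mathsf{Spec}}{\sigma_1 \ldots \sigma_{\intlen{\sigma}}} = \STrue$. Combining these, an interval $\sigma$ satisfies $\ITLInit \mathsf{w} \ITLAnd \ITLWeakNext\mathsf{Spec}$ iff $\mathsf{w}$ holds in $\sigma_0$ and either $\intlen{\sigma}=0$ or $\mathsf{Spec}$ holds on the one-step suffix $\sigma_1 \ldots \sigma_{\intlen{\sigma}}$.

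Next I would take two satisfying intervals $\sigma, \sigma'$ with $\intlen{\sigma} \leq \intlen{\sigma'}$ and prove agreement of their value traces on the first $\intlen{\sigma}+1$ entries, decomposing each value trace into its head (the first state) and its tail (the remaining states). For the head, since $\ITLInit \mathsf{w}$ depends only on the first state, the one-state restrictions $\langle \sigma_0 \rangle$ and $\langle \sigma'_0 \rangle$ both satisfy $\ITLInit \mathsf{w} \ITLAnd \ITLEmpty$, so $\ddagger[\ITLInit \mathsf{w} \ITLAnd \ITLEmpty]_{\overline{\mathsf{v}}}$ forces $\overline{\sigma_0(\mathsf{v})} = \overline{\sigma'_0(\mathsf{v})}$. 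For the tail, relevant only when $\intlen{\sigma}\geq 1$, both suffixes $\sigma_1 \ldots \sigma_{\intlen{\sigma}}$ and $\sigma'_1 \ldots \sigma'_{\intlen{\sigma'}}$ satisfy $\mathsf{Spec}$ with lengths related by $\intlen{\sigma}-1 \leq \intlen{\sigma'}-1$, so $\ddagger[\mathsf{Spec}]_{\overline{\mathsf{v}}}$ yields agreement of the tail value traces up to length $\intlen{\sigma}-1$. Concatenating head and tail delivers the required prefix equality of the value traces of $\sigma$ and $\sigma'$, which is exactly $\ddagger[\ITLInit \mathsf{w} \ITLAnd \ITLWeakNext\mathsf{Spec}]_{\overline{\mathsf{v}}}$.

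The main obstacle I anticipate is purely the bookkeeping in the semantic unfolding: translating $\ITLInit$ (through $\ITLBi$ and the implication $\ITLEmpty \ITLImplies \mathsf{w}$) into the condition that $\mathsf{w}$ holds on $\langle \sigma_0 \rangle$, translating $\ITLWeakNext$ (through $\ITLSkip \ITLChop \mathsf{Spec}$) into the one-step-suffix condition, and justifying that restricting an interval satisfying $\ITLInit \mathsf{w}$ to its first state preserves satisfaction of $\ITLInit \mathsf{w} \ITLAnd \ITLEmpty$. Once these characterisations are established, the head/tail split of the value trace and the two appeals to the hypotheses $\ddagger[\ITLInit \mathsf{w} \ITLAnd \ITLEmpty]_{\overline{\mathsf{v}}}$ and $\ddagger[\mathsf{Spec}]_{\overline{\mathsf{v}}}$ are routine.
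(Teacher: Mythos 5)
Your proof is correct. The paper itself gives no written proof of this theorem (it defers all proofs in this section to the Isabelle/HOL formalisation), but your argument is the natural one: the semantic characterisations of $\ITLInit \mathsf{w}$ (constraining only $\langle\sigma_0\rangle$) and $\ITLWeakNext\mathsf{Spec}$ (empty, or $\mathsf{Spec}$ on the one-step suffix) are accurate, the one-state witness settles satisfiability, and the head/tail split of the value trace correctly reduces the common-prefix property to the two hypotheses, including the edge case $\intlen{\sigma}=0$.
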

In Example~\ref{ex:commonpfx} we have seen that one needs to be careful in adding constructs that limit the length of an interval. The following theorem gives conditions for which it is safe to do so.
\begin{theorem}\label{th:fexecstrengthen} 
Let $\mathsf{Spec}_0$ and $\mathsf{Spec}_1$ be formula and $\overline{\mathsf{v}}$ be a list of variables.

If $\ITLSat \mathsf{Spec}_0 \ITLAnd \mathsf{Spec}_1$ and $\ddagger[\mathsf{Spec}_0]_{\overline{\mathsf{v}}}$ then 
 $\dagger[\mathsf{Spec}_0 \ITLAnd \mathsf{Spec}_1]_{\overline{\mathsf{v}}}$. 
\end{theorem}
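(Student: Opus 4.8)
The plan is to unfold the definition of forward executability, which splits the goal $\dagger[\mathsf{Spec}_0 \ITLAnd \mathsf{Spec}_1]_{\overline{\mathsf{v}}}$ into exactly two obligations: the satisfiability $\ITLSat \mathsf{Spec}_0 \ITLAnd \mathsf{Spec}_1$, and the common-prefix value-trace property $\ddagger[\mathsf{Spec}_0 \ITLAnd \mathsf{Spec}_1]_{\overline{\mathsf{v}}}$. The first of these is literally one of the two hypotheses, so it discharges immediately; all of the substance lies in establishing the second.

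For the common-prefix obligation I would argue directly from its definition. Take arbitrary intervals $\sigma$ and $\sigma'$ with $\Sem{\mathsf{Spec}_0 \ITLAnd \mathsf{Spec}_1}{\sigma} = \STrue$, $\Sem{\mathsf{Spec}_0 \ITLAnd \mathsf{Spec}_1}{\sigma'} = \STrue$ and $\intlen{\sigma} \leq \intlen{\sigma'}$. The only semantic fact I need is the clause for conjunction, namely that $\Sem{f_1 \ITLAnd f_2}{\tau} = \STrue$ forces $\Sem{f_1}{\tau} = \STrue$. Applying this to both $\sigma$ and $\sigma'$ yields $\Sem{\mathsf{Spec}_0}{\sigma} = \STrue$ and $\Sem{\mathsf{Spec}_0}{\sigma'} = \STrue$, while the length condition $\intlen{\sigma} \leq \intlen{\sigma'}$ carries over unchanged. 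Hence the hypothesis $\ddagger[\mathsf{Spec}_0]_{\overline{\mathsf{v}}}$ applies verbatim and delivers precisely the required value-trace equality $\intmap{(\lambda \mathbf{s}. \overline{\mathbf{s}(\mathsf{v})})}{\sigma} = \intmap{(\lambda \mathbf{s}. \overline{\mathbf{s}(\mathsf{v})})}{(\sigma'_0 \ldots \sigma'_{\intlen{\sigma}})}$. Since $\sigma$ and $\sigma'$ were arbitrary, $\ddagger[\mathsf{Spec}_0 \ITLAnd \mathsf{Spec}_1]_{\overline{\mathsf{v}}}$ holds, and combined with the given satisfiability this is exactly $\dagger[\mathsf{Spec}_0 \ITLAnd \mathsf{Spec}_1]_{\overline{\mathsf{v}}}$.

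Structurally the argument rests on a single monotonicity observation: conjoining $\mathsf{Spec}_1$ can only shrink the set of satisfying intervals, so that $\{\sigma : \Sem{\mathsf{Spec}_0 \ITLAnd \mathsf{Spec}_1}{\sigma} = \STrue\} \subseteq \{\sigma : \Sem{\mathsf{Spec}_0}{\sigma} = \STrue\}$. The common-prefix property is a universally quantified statement ranging over \emph{pairs} of satisfying intervals, and such a statement is inherited downwards whenever the underlying set is restricted. Thus $\ddagger$ transfers to any strengthening of $\mathsf{Spec}_0$ for free, which is the whole engine of the theorem.

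I do not anticipate a genuine technical obstacle; the proof is in essence a definitional unfolding backed by the semantics of $\ITLAnd$. The one conceptual point worth stressing is \emph{why} satisfiability must appear as a separate hypothesis rather than being derived: unlike $\ddagger$, satisfiability is not monotone under strengthening, and $\mathsf{Spec}_1$ could in principle contradict $\mathsf{Spec}_0$ or impose an incompatible length constraint, exactly as the unsafe cases of Example~\ref{ex:commonpfx} show. This is the hinge of the result: the common-prefix part of executability is preserved automatically, so all one must independently verify when strengthening an executable specification is that a model still exists.
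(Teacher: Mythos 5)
Your proof is correct and is essentially the argument the paper intends (the paper itself defers to its Isabelle/HOL formalisation rather than writing the proof out): satisfiability is taken as a hypothesis, and the common-prefix property transfers to $\mathsf{Spec}_0 \ITLAnd \mathsf{Spec}_1$ because every interval satisfying the conjunction satisfies $\mathsf{Spec}_0$, so the universally quantified $\ddagger$ condition is inherited by the smaller set of models. Your closing remark on why satisfiability cannot be derived but must be assumed is exactly the point the paper makes informally after the theorem statement.
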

In this theorem formula $\mathsf{Spec}_0$ ensures that the values for $\overline{\mathsf{v}}$ are deterministic and formula $\mathsf{Spec}_1$ is used to put extra constraints on the intervals satisfying $\mathsf{Spec}_0$. The $\ITLSat \mathsf{Spec}_0 \ITLAnd \mathsf{Spec}_1$ condition ensures that we have at least one such interval. Examples of such $\mathsf{Spec}_1$ are $\ITLLen(k)$, $\ITLSometime \ITLInit w$ and $\ITLHalt w$. On their own these formulae are not forward executable but combined with a forward executable one they will be.

\subsection{Backward executability}
We now investigate reversing executable specifications. 
%\subsection{Reflection of executable specifications}
We first introduce the laws for the reflection of the programming constructs of Table~\ref{tab:procon}. 
%\subsubsection{Reflection laws for programming constructs}
  The programming construct `past assignment' is denoted by $A =: e$ and defined as $(\ITLPrev A) = e$ and `$f$ holds in initial state only' is denoted by $\ITLInitonly f$ and defined as $\ITLBi (\ITLEmpty \ITLEquiv f)$. 
  
Note: in $\ITLPrev f$ and $\ITLPrev V$ the $\ITLPrev$ symbol is overloaded for formulae and expressions. 
  The following are reflection laws for programming constructs of Table~\ref{tab:procon}: 
  \[\begin{array}{ll}
  \ITLTheorem (V = e)\ITLRev \ITLEquiv V \Tassign e\ITLRev &
  \ITLTheorem (V \Tassign e)\ITLRev \ITLEquiv V = e\ITLRev\\
  \ITLTheorem (V := e)\ITLRev \ITLEquiv V =: e\ITLRev &
  \ITLTheorem (V =: e)\ITLRev \ITLEquiv V := e\ITLRev\\
  \ITLTheorem (V \ITLGets e)\ITLRev \ITLEquiv \ITLBa ( \ITLSkip \ITLImplies V = e\ITLRev)\\
  \ITLTheorem (\ITLIf f_0 \ITLThen f_1 \ITLElse f_2)\ITLRev \ITLEquiv \ITLIf f_0\ITLRev \ITLThen f_1\ITLRev \ITLElse f_2\ITLRev\\
  \ITLTheorem (\ITLLen(n))\ITLRev \ITLEquiv \ITLLen(n)\\
  \ITLTheorem (\ITLInit f)\ITLRev \ITLEquiv \ITLFin (f) &
  \ITLTheorem (\ITLFin f)\ITLRev \ITLEquiv \ITLInit (f)\\
  \ITLTheorem (\ITLHalt f)\ITLRev \ITLEquiv \ITLInitonly f\ITLRev &
  \ITLTheorem (\ITLInitonly f)\ITLRev \ITLEquiv \ITLHalt f\ITLRev\\
  \ITLTheorem (\ITLKeep f)\ITLRev \ITLEquiv \ITLKeep (f\ITLRev)\\
  \ITLTheorem ( \ITLInit g \ITLAnd \ITLWhile f_0 \ITLDo f_1)\ITLRev \ITLEquiv \\
  \phantom{\ITLTheorem} \ITLFin(g) \ITLAnd (f_0\ITLRev \ITLAnd f_1\ITLRev)\ITLChopstar \ITLAnd \ITLInit \ITLNot f_0\\
\end{array}
\]
Observe that the $\ITLLen$, $\ITLKeep$ and choice operators are self dual wrt reflection. The $\Tassign$ and assignment operator are dual wrt reflection. So are the unit assignment and past assignment, $\ITLInit$ and $\ITLFin$, $\ITLHalt$ and $\ITLInitonly$.

Additional temporal variables laws allows for replacing temporal variables by other temporal variables when intervals have a specific fixed length
\[\begin{array}{l|l}
  \ITLTheorem \ITLEmpty \ITLImplies (\ITLFin V) = V \\ 
  \ITLTheorem \ITLSkip \ITLImplies (\ITLFin V) = (\ITLNext V) \quad & \quad 
  \ITLTheorem \ITLSkip \ITLImplies (\ITLPrev V) = V\\
\end{array}
\]
Reflection relates the notion of prefix intervals with that of suffix intervals. So we need to introduce the ``mirror image'' of common prefix value traces, i.e. the notion of common suffix value trace. 

The following definition is a constraint on the intervals which satisfy a formula. Only intervals that share a common suffix of the value trace are allowed. 
\begin{definition}\ \\
A formula $\mathsf{Spec}$ has a common suffix value trace wrt a list of variables $\overline{\mathsf{v}}$ denoted by $\natural[\mathsf{Spec}]_{\overline{\mathsf{v}}} $  if and only if
for all intervals $\sigma$ and $\sigma'$ if 
$\Sem{\mathsf{Spec}}{\sigma} = \STrue $ and $\Sem{\mathsf{Spec}}{\sigma'} = \STrue$ and $\intlen{\sigma} \leq \intlen{\sigma'}$ then

$(\intmap{(\lambda \mathbf{s}. \overline{\mathbf{s}(\mathsf{v})}) }{ \sigma)} = (\intmap{ (\lambda \mathbf{s}. \overline{\mathbf{s}(\mathsf{v})})}{ (\sigma'_{\intlen{\sigma'} - \intlen{\sigma}}\ldots\sigma'_{\intlen{\sigma'}}))} $.
\end{definition}
The following example illustrates this notion.
\begin{example}\label{ex:commonsfx}
The following are some formula that have a common suffix value trace.
\begin{itemize}
    \item $\natural[(\ITLFin A)=0 \ITLAnd \ITLEmpty]_A$, there is only one value trace $0$.
        
   \item $\ddagger[\ITLAlways (A = 0) ]_A$,
     the possible value traces are
     \[\begin{array}{lll}
       & & 0 \\
       & 0, & 0 \\
      0, & 0, & 0 \\
      \ldots \\
       \end{array}
       \]
       Each pair of value traces share a common suffix.
\end{itemize}    
The following are some formula that have no common suffix value trace. 
\begin{itemize}
    \item $\SNot\natural[(A=0 \ITLOr A=1) \ITLAnd \ITLEmpty]_A$, we have two value traces $0$ and $1$, but they do not share a common suffix.
   \item $\SNot\natural[(\ITLFin A)=0 \ITLAnd \ITLSkip ]_A$, we have for instance value traces $0,0$ and $1,0$ but when their length are the same they ought to agree on all values and this does not hold as they disagree in the first state.
   \item $\SNot\natural[\ITLSkip]_A $, $A$ does not appear in the formula so values of $A$ are not constrained at all, one has value trace $0,0$ and $0,1$ and these do not share a common suffix.    
\end{itemize}
\end{example}
The following lemma states the relationship between common prefix, common suffix and reflection.
\begin{lemma}
Let $\mathsf{Spec}$ be formula and $\overline{\mathsf{v}}$ be a list of variables. 
\[ \begin{array}{lll}
    \ddagger[\mathsf{Spec}\ITLRev]_{\overline{\mathsf{v}}} &\SIff & \natural[\mathsf{Spec}]_{\overline{\mathsf{v}}}\\
     \natural[\mathsf{Spec}\ITLRev]_{\overline{\mathsf{v}}} &\SIff& \ddagger[\mathsf{Spec}]_{\overline{\mathsf{v}}} \\
     \end{array}
\]
\end{lemma}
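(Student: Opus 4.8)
The plan is to exploit the fact that temporal reflection acts on an interval only by reversing its sequence of states, so that the roles of ``prefix'' and ``suffix'' are interchanged. I would first record three elementary facts about interval reversal $\intrev{\cdot}$: it preserves length, $\intlen{\intrev{\sigma}}=\intlen{\sigma}$; it is an involution, $\intrev{\intrev{\sigma}}=\sigma$, so that by the defining equation $\Sem{\mathsf{G}\ITLRev}{\sigma}=\Sem{\mathsf{G}}{\intrev{\sigma}}$ the formula $(\mathsf{G}\ITLRev)\ITLRev$ has exactly the same satisfying intervals as $\mathsf{G}$; and the value trace of a reversed interval is the list-reversal of the original value trace, since $(\intrev{\sigma})_i=\sigma_{\intlen{\sigma}-i}$ gives $\intmap{(\lambda \mathbf{s}. \overline{\mathbf{s}(\mathsf{v})})}{\intrev{\sigma}}=\overline{\sigma_{\intlen{\sigma}}(\mathsf{v})}\ldots\overline{\sigma_0(\mathsf{v})}$.

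I would then reduce the whole lemma to a single generic biconditional, valid for an arbitrary formula $\mathsf{G}$:
\[
\natural[\mathsf{G}]_{\overline{\mathsf{v}}} \quad\SIff\quad \ddagger[\mathsf{G}\ITLRev]_{\overline{\mathsf{v}}}.
\]
The first line of the lemma is precisely this biconditional instantiated at $\mathsf{G}=\mathsf{Spec}$. The second line is the same biconditional instantiated at $\mathsf{G}=\mathsf{Spec}\ITLRev$, because the involution fact rewrites $\ddagger[(\mathsf{Spec}\ITLRev)\ITLRev]_{\overline{\mathsf{v}}}$ as $\ddagger[\mathsf{Spec}]_{\overline{\mathsf{v}}}$. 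Thus all four implications of the statement collapse onto this one claim together with involution.

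To prove the forward direction of the generic claim, I would take intervals $\sigma,\sigma'$ with $\Sem{\mathsf{G}\ITLRev}{\sigma}=\STrue$, $\Sem{\mathsf{G}\ITLRev}{\sigma'}=\STrue$ and $\intlen{\sigma}\leq\intlen{\sigma'}$. The reflection equation gives $\Sem{\mathsf{G}}{\intrev{\sigma}}=\STrue$ and $\Sem{\mathsf{G}}{\intrev{\sigma'}}=\STrue$, and length preservation yields $\intlen{\intrev{\sigma}}\leq\intlen{\intrev{\sigma'}}$. Applying the hypothesis $\natural[\mathsf{G}]_{\overline{\mathsf{v}}}$ to $\intrev{\sigma}$ and $\intrev{\sigma'}$ states that the value trace of $\intrev{\sigma}$ equals the length-$\intlen{\sigma}$ suffix of the value trace of $\intrev{\sigma'}$. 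Rewriting both value traces as list-reversals via the third fact, and using the combinatorial identity that the suffix of a reversed list is the reverse of the corresponding prefix of the original, I would cancel the outer reversals to obtain that the value trace of $\sigma$ equals the length-$\intlen{\sigma}$ prefix of the value trace of $\sigma'$, which is exactly $\ddagger[\mathsf{G}\ITLRev]_{\overline{\mathsf{v}}}$. The backward direction is the same argument run in reverse, starting from $\mathsf{G}$-satisfying intervals and using involution to return from $\mathsf{G}\ITLRev$ to $\mathsf{G}$, with the prefix/suffix identity applied the other way round.

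The routine set-up (the reflection equation and the involution) is immediate from the definitions, so the only real work, and the step I expect to be the main obstacle, is the index bookkeeping: verifying that reversing an interval reverses its value trace and that taking a length-$\intlen{\sigma}$ suffix commutes under reversal with taking a length-$\intlen{\sigma}$ prefix. These are purely combinatorial facts about finite lists that are conceptually straightforward but easy to get wrong through an off-by-one error on the offset $\intlen{\sigma'}-\intlen{\sigma}$; in the accompanying Isabelle/HOL development they would be discharged by the standard library lemmas on \textsf{rev}, \textsf{take} and \textsf{drop}.
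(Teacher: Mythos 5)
Your argument is correct: the three facts about $\intrev{\cdot}$ (length preservation, involution, and that it reverses the value trace), together with the observation that a length-$\intlen{\sigma}$ suffix of a reversed list is the reversal of the corresponding prefix, give exactly the stated equivalences, and your reduction of both lines to one generic biconditional via involution is sound. The paper offers no written proof of this lemma --- it delegates it to the accompanying Isabelle/HOL development --- and your semantic unfolding of $\Sem{\mathsf{Spec}\ITLRev}{\sigma}=\Sem{\mathsf{Spec}}{\intrev{\sigma}}$ plus the \textsf{rev}/\textsf{take}/\textsf{drop} bookkeeping is precisely the argument that mechanisation would carry out.
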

For the notion of satisfiability we have the following lemma.
\begin{lemma}
Let $\mathsf{Spec}$ be a formula then 
\[(\ITLSat \mathsf{Spec}\ITLRev) \SIff (\ITLSat \mathsf{Spec}) \] 
\end{lemma}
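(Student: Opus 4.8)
The plan is to exploit the single defining property of reflection, namely that $\Sem{\mathsf{Spec}\ITLRev}{\sigma} = \Sem{\mathsf{Spec}}{\intrev{\sigma}}$ for every interval $\sigma$, together with the fact that interval reversal $\intrev{\cdot}$ is a bijection on $\Sigma^+$. First I would establish the auxiliary observation that $\intrev{\cdot}$ is an involution, i.e. $\intrev{\intrev{\sigma}} = \sigma$ for all intervals $\sigma$. This follows immediately from the definition $\intrev{\sigma} = \sigma_{\intlen{\sigma}}\ldots\sigma_0$: reversing a finite sequence of states twice restores the original order, and the length is preserved, so the reversed object is again a non-empty finite interval. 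In particular $\intrev{\cdot}$ maps $\Sigma^+$ onto itself and is its own inverse.

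For the forward direction I would assume $\ITLSat \mathsf{Spec}$, so that some interval $\sigma$ satisfies $\Sem{\mathsf{Spec}}{\sigma} = \STrue$, and then take $\sigma' = \intrev{\sigma}$ as a witness for $\mathsf{Spec}\ITLRev$. Unfolding the definition of reflection and applying the involution property yields $\Sem{\mathsf{Spec}\ITLRev}{\sigma'} = \Sem{\mathsf{Spec}}{\intrev{\intrev{\sigma}}} = \Sem{\mathsf{Spec}}{\sigma} = \STrue$, hence $\ITLSat \mathsf{Spec}\ITLRev$. For the backward direction I would assume $\ITLSat \mathsf{Spec}\ITLRev$, so some $\sigma$ satisfies $\Sem{\mathsf{Spec}\ITLRev}{\sigma} = \STrue$; by the defining equation this is literally $\Sem{\mathsf{Spec}}{\intrev{\sigma}} = \STrue$, so $\intrev{\sigma}$ witnesses $\ITLSat \mathsf{Spec}$.

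A more uniform presentation of both directions is to observe that the defining equation says the set of models of $\mathsf{Spec}\ITLRev$ is exactly the image under the bijection $\intrev{\cdot}$ of the set of models of $\mathsf{Spec}$; since a bijection carries inhabited sets to inhabited sets in both directions, one set is non-empty precisely when the other is, which is the claimed equivalence. I expect no genuine obstacle here: the only point requiring any care is the involution property of $\intrev{\cdot}$, and even that is a one-line consequence of the definition of sequence reversal, while the remainder is pure unfolding of definitions. In the Isabelle/HOL formalisation this reduces to rewriting with the reflection and reversal definitions and invoking the already-available lemma that double reversal is the identity.
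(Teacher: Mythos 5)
Your proof is correct and follows exactly the intended argument: the paper gives no written proof for this lemma (it defers to the Isabelle/HOL formalisation), but the definitional unfolding of $\Sem{\mathsf{Spec}\ITLRev}{\sigma} = \Sem{\mathsf{Spec}}{\intrev{\sigma}}$ combined with the involution property of $\intrev{\cdot}$ is precisely the argument the formalisation rests on. Both directions are handled soundly and the bijection-on-models observation is a clean way to package them.
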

The following theorem states that the combination of satisfiability with the notion of common suffix value trace can be used to determine whether a formula is deterministic or not, i.e., is backward executable or not.
\begin{theorem}
Let $\mathsf{Spec}$ be a formula and $\overline{\mathsf{v}}$ be a list of variables. 

If $\ITLSat \mathsf{Spec}$   and $\natural[\mathsf{Spec}]_{\overline{\mathsf{v}}}$ then
for all $k \geq 0$

$\#\{ (\intmap{(\lambda \mathbf{s}. \overline{\mathbf{s}(\mathsf{v})})}{\sigma})\, |\, \Sem{\mathsf{Spec}}{\sigma} = \STrue \SAnd \intlen{\sigma} = k  \} \leq 1 $.
\end{theorem}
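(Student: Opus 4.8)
The plan is to exploit the fact that this statement is the exact mirror of the forward-executability theorem proved earlier, with common suffix value traces ($\natural$) playing the role of common prefix value traces ($\ddagger$). I would give a short direct argument and note that a second route via reflection is also available.

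\textbf{Direct argument.} Fix $k \geq 0$ and suppose $\sigma$ and $\sigma'$ are two intervals with $\Sem{\mathsf{Spec}}{\sigma} = \STrue$, $\Sem{\mathsf{Spec}}{\sigma'} = \STrue$ and $\intlen{\sigma} = \intlen{\sigma'} = k$. It suffices to show that their value traces coincide, since then every length-$k$ satisfying interval yields the same trace and the set in the statement has at most one element. As $\intlen{\sigma} \leq \intlen{\sigma'}$, the hypothesis $\natural[\mathsf{Spec}]_{\overline{\mathsf{v}}}$ applies and gives
\[
\intmap{(\lambda \mathbf{s}. \overline{\mathbf{s}(\mathsf{v})})}{\sigma} = \intmap{(\lambda \mathbf{s}. \overline{\mathbf{s}(\mathsf{v})})}{(\sigma'_{\intlen{\sigma'} - \intlen{\sigma}}\ldots\sigma'_{\intlen{\sigma'}})}.
\]
The key step is the index bookkeeping: since $\intlen{\sigma'} - \intlen{\sigma} = 0$ and $\intlen{\sigma'} = k$, the suffix on the right is $\sigma'_0\ldots\sigma'_k = \sigma'$ in full, so the right-hand side is just the value trace of $\sigma'$. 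Hence the two value traces are equal, as required.

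\textbf{Reflection route.} Alternatively, one transports the problem to $\mathsf{Spec}\ITLRev$ and reuses the forward theorem. The two lemmas immediately preceding yield $\ddagger[\mathsf{Spec}\ITLRev]_{\overline{\mathsf{v}}}$ from $\natural[\mathsf{Spec}]_{\overline{\mathsf{v}}}$ and $\ITLSat \mathsf{Spec}\ITLRev$ from $\ITLSat \mathsf{Spec}$, so the forward theorem bounds the number of length-$k$ value traces of $\mathsf{Spec}\ITLRev$ by one. To push this back to $\mathsf{Spec}$ I would use that $\sigma \mapsto \intrev{\sigma}$ is a length-preserving bijection between the length-$k$ models of $\mathsf{Spec}$ and those of $\mathsf{Spec}\ITLRev$, since $\Sem{\mathsf{Spec}\ITLRev}{\intrev{\sigma}} = \Sem{\mathsf{Spec}}{\sigma}$, and that under this bijection the value trace of $\intrev{\sigma}$ is the reversal of the value trace of $\sigma$. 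As reversal is itself injective on value traces, the two trace sets have equal cardinality and the bound transfers.

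\textbf{Main obstacle.} There is little genuine difficulty: the conclusion is essentially immediate from the definition of common suffix value trace. The only point requiring care in the direct argument is the suffix index computation, where one must confirm that the length-$k$ suffix of a length-$k$ trace is the whole trace; in the reflection route the analogous care is in checking that interval reversal induces a value-trace bijection preserving distinctness. I would also remark that the satisfiability hypothesis $\ITLSat \mathsf{Spec}$ is not in fact needed for the $\leq 1$ conclusion, since an empty trace set has cardinality $0$; it is carried along only because satisfiability together with $\natural[\mathsf{Spec}]_{\overline{\mathsf{v}}}$ is what one takes as the definition of backward executability, paralleling the forward case.
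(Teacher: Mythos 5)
Your proposal is correct. Note that the paper itself gives no written proof of this theorem (or of the other executability theorems); it only records that they have been specified and verified in Isabelle/HOL. Your direct argument is the natural one and is sound: for two satisfying intervals $\sigma$, $\sigma'$ of equal length $k$, the hypothesis $\natural[\mathsf{Spec}]_{\overline{\mathsf{v}}}$ applies with $\intlen{\sigma'}-\intlen{\sigma}=0$, so the designated suffix of $\sigma'$ is all of $\sigma'$ and the two value traces coincide. The reflection route is also valid and is arguably closer in spirit to how the paper organises the material, since the two lemmas relating $\natural$ to $\ddagger$ under $\ITLRev$ and preserving satisfiability are stated immediately before this theorem precisely so that the backward case can be transported to the already-established forward case; the only extra care it needs, which you correctly identify, is that $\sigma\mapsto\intrev{\sigma}$ is a length-preserving bijection on models under which value traces are reversed, and reversal is injective on traces. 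Your observation that $\ITLSat\mathsf{Spec}$ is not actually needed for the $\leq 1$ bound is also correct.
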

The notion of common suffix value trace 
corresponds to notion of generating a satisfying interval for a formula but it ''limits'' how  this is achieved, i.e., one proceeds in a backward manner. The following definition introduces the notion of backward executability. 
\begin{definition}
Let $\mathsf{Spec}$ be a formula and $\overline{\mathsf{v}}$ a list of variables.

$\mathsf{Spec}$ is backward executable wrt to $\overline{\mathsf{v}}$ denoted by $\flat[\mathsf{Spec}]_{\overline{\mathsf{v}}}$ if and only if

$\ITLSat \mathsf{Spec}$ and $\natural[\mathsf{Spec}]_{\overline{\mathsf{v}}}$.
\end{definition}
In Tempura we have unfortunately no rules for backward execution. But we can define a mirror image of Theorem~\ref{th:fexec}, i.e., the normal form would be $\ITLFin \mathsf{w} \ITLAnd \ITLWeakPrev \mathsf{Spec}$. So we first generate the last state of the interval and then proceed to determine the previous state if there is any. 
\begin{theorem}
Given the formulae $\mathsf{w}$ and $\mathsf{Spec}$ and list of variables $\overline{\mathsf{v}}$.

If $\flat[\ITLFin \mathsf{w} \ITLAnd \ITLEmpty]_{\overline{\mathsf{v}}}$ and $\flat[\mathsf{Spec}]_{\overline{\mathsf{v}}}$ then  $\flat[\ITLFin \mathsf{w} \ITLAnd \ITLWeakPrev \mathsf{Spec}]_{\overline{\mathsf{v}}}$. 
\end{theorem}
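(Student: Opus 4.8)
The plan is to reduce this statement to its already-proved forward counterpart, Theorem~\ref{th:fexec}, by passing through the reflection operator. The engine of the reduction is the observation, immediate from the two preceding lemmas, that backward executability of a formula is exactly forward executability of its reflection: for every formula $\Phi$ and list of variables $\overline{\mathsf{v}}$,
\[ \flat[\Phi]_{\overline{\mathsf{v}}} \quad\SIff\quad \dagger[\Phi\ITLRev]_{\overline{\mathsf{v}}}. \]
Unfolding the definitions, $\flat[\Phi]_{\overline{\mathsf{v}}}$ asserts $\ITLSat \Phi$ together with $\natural[\Phi]_{\overline{\mathsf{v}}}$, while $\dagger[\Phi\ITLRev]_{\overline{\mathsf{v}}}$ asserts $\ITLSat \Phi\ITLRev$ together with $\ddagger[\Phi\ITLRev]_{\overline{\mathsf{v}}}$. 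The satisfiability lemma gives $\ITLSat \Phi \SIff \ITLSat \Phi\ITLRev$ and the reflection lemma gives $\natural[\Phi]_{\overline{\mathsf{v}}} \SIff \ddagger[\Phi\ITLRev]_{\overline{\mathsf{v}}}$, so the displayed equivalence follows. Throughout I would use freely that all the predicates $\ddagger,\natural,\dagger,\flat$ (and $\ITLSat$) depend only on the set of intervals satisfying the formula, and are therefore invariant under replacing a formula by a logically equivalent one; this is what licenses rewriting reflected formulas by means of the reflection laws.

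First I would compute the reflections of the two relevant backward normal forms. Using the conjunction law $R_2$ together with the derived laws $(\ITLFin f)\ITLRev \ITLEquiv \ITLInit f$, $(\ITLWeakPrev f)\ITLRev \ITLEquiv \ITLWeakNext (f\ITLRev)$ and $\ITLEmpty\ITLRev \ITLEquiv \ITLEmpty$, one obtains
\[ (\ITLFin \mathsf{w} \ITLAnd \ITLEmpty)\ITLRev \ITLEquiv \ITLInit \mathsf{w} \ITLAnd \ITLEmpty, \qquad (\ITLFin \mathsf{w} \ITLAnd \ITLWeakPrev \mathsf{Spec})\ITLRev \ITLEquiv \ITLInit \mathsf{w} \ITLAnd \ITLWeakNext (\mathsf{Spec}\ITLRev). \]
The point to notice is that reflection turns $\ITLFin$ into $\ITLInit$ and $\ITLWeakPrev$ into $\ITLWeakNext$, so the backward normal form maps to a forward normal form \emph{with the same leading formula} $\mathsf{w}$ --- precisely the shape that Theorem~\ref{th:fexec} consumes.

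Now I would assemble the argument. By the duality equivalence, the two hypotheses $\flat[\ITLFin \mathsf{w} \ITLAnd \ITLEmpty]_{\overline{\mathsf{v}}}$ and $\flat[\mathsf{Spec}]_{\overline{\mathsf{v}}}$ become $\dagger[(\ITLFin \mathsf{w} \ITLAnd \ITLEmpty)\ITLRev]_{\overline{\mathsf{v}}}$ and $\dagger[\mathsf{Spec}\ITLRev]_{\overline{\mathsf{v}}}$; by the first reflection computation and invariance under equivalence the former reads $\dagger[\ITLInit \mathsf{w} \ITLAnd \ITLEmpty]_{\overline{\mathsf{v}}}$. Instantiating Theorem~\ref{th:fexec} with leading formula $\mathsf{w}$ and body $\mathsf{Spec}\ITLRev$ then yields $\dagger[\ITLInit \mathsf{w} \ITLAnd \ITLWeakNext (\mathsf{Spec}\ITLRev)]_{\overline{\mathsf{v}}}$. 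Applying the duality equivalence once more, now to $\Phi = \ITLFin \mathsf{w} \ITLAnd \ITLWeakPrev \mathsf{Spec}$ and using the second reflection computation above, this is exactly $\flat[\ITLFin \mathsf{w} \ITLAnd \ITLWeakPrev \mathsf{Spec}]_{\overline{\mathsf{v}}}$, the desired conclusion. Note that the biconditional $\flat[\Phi]_{\overline{\mathsf{v}}} \SIff \dagger[\Phi\ITLRev]_{\overline{\mathsf{v}}}$ is used three times with an appropriate choice of $\Phi$, so no double reflection of $\mathsf{Spec}$ ever needs to be simplified away.

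I do not expect a genuinely hard step: once Theorem~\ref{th:fexec} and the reflection lemmas are in hand, the proof is structurally forced. The one place that must be handled carefully rather than waved through is the invariance of the executability predicates under logical equivalence, since the whole reduction rests on silently replacing each $(\cdot)\ITLRev$ by the ITL formula the reflection laws equate it to; this invariance is immediate from the fact that $\dagger,\flat,\ddagger,\natural$ are defined purely in terms of $\Sem{f}{\sigma}=\STrue$. The only other thing worth verifying explicitly is the structural match flagged above --- that $\ITLFin$ reflects to $\ITLInit$ and $\ITLWeakPrev$ to $\ITLWeakNext$ --- so that the same $\mathsf{w}$ slot of Theorem~\ref{th:fexec} is reused and the two hypotheses line up exactly with its premises.
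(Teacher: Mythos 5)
Your proposal is correct, and it matches the paper's intent: the paper offers no textual proof of this theorem (all theorems are stated as verified in Isabelle/HOL) but explicitly introduces it as the ``mirror image'' of Theorem~\ref{th:fexec}, which is precisely the reduction you carry out via the duality $\flat[\Phi]_{\overline{\mathsf{v}}} \SIff \dagger[\Phi\ITLRev]_{\overline{\mathsf{v}}}$, the reflection laws $(\ITLFin f)\ITLRev \ITLEquiv \ITLInit f$, $(\ITLWeakPrev f)\ITLRev \ITLEquiv \ITLWeakNext (f\ITLRev)$, $\ITLEmpty\ITLRev \ITLEquiv \ITLEmpty$, and invariance of the executability predicates under logical equivalence. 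Your explicit attention to that invariance and to the fact that the same $\mathsf{w}$ slot is reused is exactly the right place to be careful.
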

The following theorem is similar to Theorem~\ref{th:fexecstrengthen}. 
\begin{theorem}
Let $\mathsf{Spec}_0$ and $\mathsf{Spec}_1$ be formula and $\overline{\mathsf{v}}$ be a list of variables.

If $\ITLSat \mathsf{Spec}_0 \ITLAnd \mathsf{Spec}_1$ and $\natural[\mathsf{Spec}_0]_{\overline{\mathsf{v}}}$ then $\flat[\mathsf{Spec}_0 \ITLAnd \mathsf{Spec}_1]_{\overline{\mathsf{v}}}$. 
\end{theorem}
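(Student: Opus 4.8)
The plan is to unfold the target $\flat[\mathsf{Spec}_0 \ITLAnd \mathsf{Spec}_1]_{\overline{\mathsf{v}}}$ through the definition of backward executability, which splits it into two obligations: $\ITLSat \mathsf{Spec}_0 \ITLAnd \mathsf{Spec}_1$ and $\natural[\mathsf{Spec}_0 \ITLAnd \mathsf{Spec}_1]_{\overline{\mathsf{v}}}$. The first obligation is literally one of the hypotheses, so nothing is required there. All the content therefore lies in establishing the common suffix value trace for the conjunction, and since this theorem is the exact backward mirror of Theorem~\ref{th:fexecstrengthen}, I expect the argument to follow that one verbatim with ``prefix'' replaced throughout by ``suffix''.

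For the common suffix obligation I would argue by downward closure of the model set under conjunction. First I would fix two arbitrary intervals $\sigma$ and $\sigma'$ with $\Sem{\mathsf{Spec}_0 \ITLAnd \mathsf{Spec}_1}{\sigma} = \STrue$, $\Sem{\mathsf{Spec}_0 \ITLAnd \mathsf{Spec}_1}{\sigma'} = \STrue$ and $\intlen{\sigma} \leq \intlen{\sigma'}$. By the semantics of $\ITLAnd$ each of these intervals in particular satisfies $\mathsf{Spec}_0$, that is $\Sem{\mathsf{Spec}_0}{\sigma} = \STrue$ and $\Sem{\mathsf{Spec}_0}{\sigma'} = \STrue$. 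The hypothesis $\natural[\mathsf{Spec}_0]_{\overline{\mathsf{v}}}$ then applies to this very pair and yields the suffix alignment $\intmap{(\lambda \mathbf{s}. \overline{\mathbf{s}(\mathsf{v})})}{\sigma} = \intmap{(\lambda \mathbf{s}. \overline{\mathbf{s}(\mathsf{v})})}{(\sigma'_{\intlen{\sigma'} - \intlen{\sigma}}\ldots\sigma'_{\intlen{\sigma'}})}$. Since $\sigma$ and $\sigma'$ were arbitrary models of $\mathsf{Spec}_0 \ITLAnd \mathsf{Spec}_1$, this is exactly $\natural[\mathsf{Spec}_0 \ITLAnd \mathsf{Spec}_1]_{\overline{\mathsf{v}}}$, and combining it with the satisfiability hypothesis closes the definition of $\flat[\mathsf{Spec}_0 \ITLAnd \mathsf{Spec}_1]_{\overline{\mathsf{v}}}$.

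There is no genuinely hard step; the only point demanding care is conceptual rather than technical. One must resist trying to prove a common suffix property for $\mathsf{Spec}_1$ in isolation, since the role of $\mathsf{Spec}_1$ is merely to \emph{restrict} the admissible intervals (as with the mirror images of $\ITLLen(k)$, $\ITLSometime \ITLInit w$ or $\ITLHalt w$), and such constraints need not be deterministic on their own: determinism of the value trace is supplied entirely by $\mathsf{Spec}_0$, while the satisfiability hypothesis guarantees the model set is non-empty. An alternative, slicker route would avoid re-running the argument altogether: using the reflection lemmas $\ddagger[\mathsf{Spec}\ITLRev]_{\overline{\mathsf{v}}} \SIff \natural[\mathsf{Spec}]_{\overline{\mathsf{v}}}$ and $\ITLSat \mathsf{Spec}\ITLRev \SIff \ITLSat \mathsf{Spec}$ one assembles the bridge $\flat[\mathsf{Spec}]_{\overline{\mathsf{v}}} \SIff \dagger[\mathsf{Spec}\ITLRev]_{\overline{\mathsf{v}}}$, then rewrites $(\mathsf{Spec}_0 \ITLAnd \mathsf{Spec}_1)\ITLRev \ITLEquiv \mathsf{Spec}_0\ITLRev \ITLAnd \mathsf{Spec}_1\ITLRev$ by law $R_2$ and applies Theorem~\ref{th:fexecstrengthen} directly to $\mathsf{Spec}_0\ITLRev$ and $\mathsf{Spec}_1\ITLRev$. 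I would present the direct argument as primary, since it is self-contained, and note the reflection route only as a confirmation.
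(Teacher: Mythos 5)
Your proposal is correct: the common-suffix property is universally quantified over pairs of models, so it is inherited by the smaller model set of $\mathsf{Spec}_0 \ITLAnd \mathsf{Spec}_1$, and the satisfiability conjunct is a hypothesis; this antitonicity argument is exactly the intended one (the paper itself gives no textual proof, deferring to its Isabelle/HOL formalisation, and it is the precise mirror of Theorem~\ref{th:fexecstrengthen}). Your observation that no suffix property is needed for $\mathsf{Spec}_1$ alone, and the alternative route via the reflection lemmas and $R_2$, are both sound.
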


\subsection{Reversing the effects of bad computations} 
In the introduction we have seen that we are interested in formulae of the form   $\mathsf{Good}\ITLChop\mathsf{Bad}\ITLChop(\mathsf{Bad})\ITLRev$. We now investigate under  which conditions can we forward execute $\mathsf{Bad}\ITLChop\mathsf{Bad}\ITLRev$.
The chop operator is non-deterministic if  the length of  $\mathsf{Bad}$ is left unspecified, i.e., generally we have 
$\SNot\dagger[\mathsf{Bad}\ITLChop\mathsf{Bad}\ITLRev]_{\overline{v}}$. 
However, we can use Theorem~\ref{th:fexecstrengthen} to strengthen $\mathsf{Bad}$ to $\mathsf{Bad} \ITLAnd \ITLLen (k)$. We similarly strengthen the $\mathsf{Bad}\ITLRev$ to $\mathsf{Bad}\ITLRev \ITLAnd \ITLLen(k)$ in order to ensure that we undone that specific bad computation  $\mathsf{Bad} \ITLAnd \ITLLen(k)$. Note that $(\mathsf{Bad} \ITLAnd \ITLLen(k))\ITLRev $ is equivalent to $\mathsf{Bad}\ITLRev \ITLAnd \ITLLen(k)$, this follows from the reflection laws. 

The following theorem gives the conditions necessary to ``undo'' a bad computation.
\begin{theorem}
Let $\mathsf{Spec}$  be a formula and $\overline{\mathsf{v}}$ be a list of variables. 

If $\ITLSat \mathsf{Spec} \ITLAnd \ITLLen k$ and $\ddagger[\mathsf{Spec}]_{\overline{\mathsf{v}}}$ and $\natural[\mathsf{Spec}]_{\overline{\mathsf{v}}}$ then 
\begin{enumerate}
    \item $\dagger[(\mathsf{Spec} \ITLAnd \ITLLen (k))\ITLChop(\mathsf{Spec}\ITLRev \ITLAnd \ITLLen (k))]_{\overline{\mathsf{v}}}$ and
    \item $\flat[(\mathsf{Spec}\ITLRev \ITLAnd \ITLLen (k))\ITLChop(\mathsf{Spec} \ITLAnd \ITLLen (k))]_{\overline{\mathsf{v}}}$. 
\end{enumerate}

\end{theorem}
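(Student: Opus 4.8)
The plan is to unfold both $\dagger$ and $\flat$ into their defining conjuncts (satisfiability together with a common prefix, respectively suffix, value trace) and to exploit the fact that, under the hypotheses, $\mathsf{Spec}\ITLAnd\ITLLen(k)$ pins down a single value trace. Concretely, since $\ITLSat\mathsf{Spec}\ITLAnd\ITLLen(k)$ and $\ddagger[\mathsf{Spec}]_{\overline{\mathsf{v}}}$, any two satisfying intervals of $\mathsf{Spec}\ITLAnd\ITLLen(k)$ have equal length $k$, so by $\ddagger[\mathsf{Spec}]_{\overline{\mathsf{v}}}$ their value traces coincide; call this unique value trace $t_0\,t_1\ldots t_k$. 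Using the reflection semantics $\Sem{\mathsf{Spec}\ITLRev}{\sigma}=\Sem{\mathsf{Spec}}{\intrev{\sigma}}$ together with $(\ITLLen(k))\ITLRev\ITLEquiv\ITLLen(k)$, the intervals satisfying $\mathsf{Spec}\ITLRev\ITLAnd\ITLLen(k)$ are exactly the reversals of those satisfying $\mathsf{Spec}\ITLAnd\ITLLen(k)$, so this formula also has a single value trace, namely the reverse $t_k\,t_{k-1}\ldots t_0$. Finally, the two reflection lemmas give $\ddagger[\mathsf{Spec}\ITLRev]_{\overline{\mathsf{v}}}$ (from $\natural[\mathsf{Spec}]_{\overline{\mathsf{v}}}$), $\natural[\mathsf{Spec}\ITLRev]_{\overline{\mathsf{v}}}$ (from $\ddagger[\mathsf{Spec}]_{\overline{\mathsf{v}}}$) and $\ITLSat\mathsf{Spec}\ITLRev\ITLAnd\ITLLen(k)$.

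For item 1 I would argue as follows. Because both conjuncts of the chop carry $\ITLLen(k)$, the chop semantics forces the split point to be $k$ and the whole interval to have length $2k$: the prefix $\sigma_0\ldots\sigma_k$ realises $\mathsf{Spec}\ITLAnd\ITLLen(k)$ with value trace $t_0\ldots t_k$, and the suffix $\sigma_k\ldots\sigma_{2k}$ realises $\mathsf{Spec}\ITLRev\ITLAnd\ITLLen(k)$ with value trace $t_k\ldots t_0$. The two halves agree at the shared midpoint state (both report $t_k$), so they compose into the single value trace $t_0\ldots t_{k-1}\,t_k\,t_{k-1}\ldots t_0$. Satisfiability follows by exhibiting a witness: take any $\sigma^\ast$ with $\Sem{\mathsf{Spec}\ITLAnd\ITLLen(k)}{\sigma^\ast}=\STrue$; then $\intrev{\sigma^\ast}$ satisfies $\mathsf{Spec}\ITLRev\ITLAnd\ITLLen(k)$, and since the last state of $\sigma^\ast$ is literally the first state of $\intrev{\sigma^\ast}$, the fusion of $\sigma^\ast$ and $\intrev{\sigma^\ast}$ is a well-defined interval satisfying the chop. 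The common-prefix condition $\ddagger$ then holds because every satisfying interval has the same fixed length $2k$ and, by the uniqueness of the two half value traces, the same overall value trace; so at most one value trace occurs at each length. Together this gives $\dagger[(\mathsf{Spec}\ITLAnd\ITLLen(k))\ITLChop(\mathsf{Spec}\ITLRev\ITLAnd\ITLLen(k))]_{\overline{\mathsf{v}}}$.

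For item 2 I would avoid repeating the calculation and instead use reflection duality. The two reflection lemmas combine to give the general equivalence $\flat[\mathsf{G}]_{\overline{\mathsf{v}}}\SIff\dagger[\mathsf{G}\ITLRev]_{\overline{\mathsf{v}}}$, since $\flat[\mathsf{G}]$ asks for satisfiability of $\mathsf{G}$ together with $\natural[\mathsf{G}]$, whereas $\dagger[\mathsf{G}\ITLRev]$ asks for satisfiability of $\mathsf{G}\ITLRev$, equivalent to that of $\mathsf{G}$, together with $\ddagger[\mathsf{G}\ITLRev]$, equivalent to $\natural[\mathsf{G}]$. Applying this with $\mathsf{G}=(\mathsf{Spec}\ITLRev\ITLAnd\ITLLen(k))\ITLChop(\mathsf{Spec}\ITLAnd\ITLLen(k))$ and using $R_4$, $R_2$, $(\mathsf{Spec}\ITLRev)\ITLRev\ITLEquiv\mathsf{Spec}$ and $(\ITLLen(k))\ITLRev\ITLEquiv\ITLLen(k)$ to compute $\mathsf{G}\ITLRev\ITLEquiv\mathsf{G}$, it suffices to prove $\dagger[\mathsf{G}]_{\overline{\mathsf{v}}}$. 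But this is exactly item 1 with $\mathsf{Spec}$ replaced by $\mathsf{Spec}\ITLRev$: the instantiated hypotheses $\ITLSat\mathsf{Spec}\ITLRev\ITLAnd\ITLLen(k)$, $\ddagger[\mathsf{Spec}\ITLRev]_{\overline{\mathsf{v}}}$ and $\natural[\mathsf{Spec}\ITLRev]_{\overline{\mathsf{v}}}$ were already established above, and $(\mathsf{Spec}\ITLRev)\ITLRev\ITLEquiv\mathsf{Spec}$ turns the resulting conclusion into precisely $\mathsf{G}$.

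I expect the main obstacle to be item 1's two bookkeeping points: first, justifying rigorously that $\ITLLen(k)$ on both sides forces the split point and hence the total length $2k$, so that no spurious satisfying intervals of other lengths sneak in and break $\ddagger$; and second, verifying that the forward half and the reflected half genuinely meet at the midpoint. The latter is the conceptual heart: it relies on the midpoint state being shared in the chop semantics and on the reflected formula's unique value trace being the exact reverse $t_k\ldots t_0$, so that the boundary value $t_k$ is common to both halves and the fusion is never $\ITLFalse$. Once these are nailed down, satisfiability and determinism of the composite value trace are immediate, and item 2 is a short duality argument.
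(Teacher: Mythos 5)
Your argument is correct, and it is essentially the intended one: the paper prints no proof for this theorem (it defers to the Isabelle/HOL formalisation), but the two lemmas relating $\ddagger$, $\natural$ and satisfiability under $\ITLRev$, together with the reflection laws $R_2$, $R_4$ and $(\ITLLen(k))\ITLRev \ITLEquiv \ITLLen(k)$, are stated immediately beforehand precisely so that the theorem follows the way you derive it. Your key observations --- that $\ITLLen(k)$ on both chop operands forces the split point and total length, that the reflected half's unique value trace is the exact reverse so the two halves agree at the fused midpoint state, and that item~2 reduces to item~1 via $\flat[\mathsf{G}]_{\overline{\mathsf{v}}} \SIff \dagger[\mathsf{G}\ITLRev]_{\overline{\mathsf{v}}}$ with $\mathsf{G}\ITLRev \ITLEquiv \mathsf{G}$ --- are all sound and complete the proof.
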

Notice that $\mathsf{Spec}$ needs to have both a common prefix value trace and a common suffix value trace. In the first case we proceed in a forward manner while in the second case in a backward manner.

\section{Results and Discussion}
\label{sec:results}
%%% TO BE REWRITTEN
%%%%

We have shown that first order ITL is closed under reflection. This is an extension of the work of \cite{Moszkowski2014} where reflection of propositional ITL is discussed. We then investigated the reversibility of executable ITL specifications. Executable specifications allow for the testing and simulation of specifications in that the satisfying behaviour of those specifications is generated.  We first formalised the notion of forward executability which corresponds to the way the Tempura tool \cite{tempurabook,Cauhomepage} generates these behaviours. The tool rewrites the specification into a normal form $\ITLInit w \ITLAnd \ITLWeakNext f$ where $\ITLInit w$ specifies the initial state of the behaviour and $\ITLWeakNext f$ corresponds to the behaviour from the next state onward if there is such a next state. We have shown that the rewrite mechanism of Tempura indeed preserves forward executability of specifications.

We then investigated the reversibility of executable specifications. We introduced the notion of backward executability of specifications, this corresponds to mechanism of first generating the last state of the behaviour and then generate the previous behaviour. The required normal form would be $\ITLFin w \ITLAnd \ITLWeakPrev f$. We have shown that in order to reverse a bad computation/behaviour of an executable specification it needs to be both forward and backward executable.  

The notions of common prefix value trace $\ddagger[\mathsf{Spec}]_{\overline{v}}$ and common suffix value trace $\natural[\mathsf{Spec}]_{\overline{v}}$ are related to Allen's logic operators ``starts'' and ``finishes'' \cite{alleninterval} that relate two sequences of entities, i.e.,  in our case sequences of values.

The mechanism proposed in this paper only requires the current state to achieve reversibility so there is no need of the storing of the history of the original executable specification as used in the literature \cite{5391327,KP2014,KUHN2018}. The intuitive approach of saving everything and using this to restore to a desired position is named checkpointing. One variation is to store the initial state and to simply restore to this. No intermediate states can be restored without additional forward re-execution. Another variation of this, named full checkpointing, records the entire state at each intermediate step %(each intermediate formulae) 
and so allows immediate restoration to any previous position \cite{KP2014}. The concern here is of the amount of information saved, much of which will not be changed by a single step. The periodic or incremental checkpoint variation reduces the memory usage, but relies on restoring to a previous checkpoint and re-executing forward to reach the desired position \cite{KP2014}. In contrast, our method allows reverting to any previous state and uses a low amount of memory for this as only the current state is needed and this is always available, no extra storage is needed.

The type of reversibility we have modelled is backtracking. Since we do not have a notion of causality, we cannot model causally consistent or out-of-causal-order reversibility \cite{KUHN2018}. We also do not control when reversibility is used, modelling a form of uncontrolled reversibility \cite{LaneseMS12}. This relies on a separate mechanism for controlling forward and backwards execution. If the control is integrated with the mechanism for reversibility, we would either have controlled \cite{LaneseMS12} or locally controlled reversibility \cite{KUHN2018}.

\section{Conclusion and Future Work}
\label{sec:conclusion}
%%% TO BE REWRITTEN

First order ITL is a flexible notation for specifying properties and behaviours of systems. Most imperative programming constructs are denoted by formulae in ITL. We have used the reflection operator for the specification of reversed behaviour of systems. It is shown that ITL is closed under this reflection operator which means that we can specify its reverse for any ITL formula. We have presented an extensive list of reflection laws that help in the construction of the reverse of an ITL formula. We have shown that when an ITL formula is forward and backward executable then one can indeed reverse its behaviour. 

Future work consists of adding the backward execution mechanism to the Tempura tool.  The reflection and reversal of event-based programs is another area of interest. In an event-based program, a trigger event causes a chain of reactions by a system. The occurrence of a trigger can not be reversed but the reaction by the system can be reversed. However, this reaction might include other triggers that will set of other chains of reactions. Determining this chain of reactions and reversing its effects are some of the challenges that need to be addressed.

%the generation of behaviours , called programs, the reflected formula is not necessarily executable. This can be overcome by the introduction of local storage to remember the value of a variable at a particular point in time. Using this, there is an executable, reflected formula for every ITL program, so ITL programs are reversible. This corresponds to backtracking.

%Future work will need to investigate whether the reflection operator can be used for concurrent programs.  ITL can describe both message passing~\cite{cau97} and shared variable based concurrency~\cite{tempurabook}.
%%%reversing message-passing based concurrent formulae and if such programs can be reversed. As ITL is able to specify message passing constructs~\cite{cau97} this ought to be straightforward.

%A conclusion

%future work toffoli, perhaps causally consistent reversibility, partial reversibility of while and others? events and reactions, messages as appliation of parralelism

%\bibliographystyle{unsrt}
\bibliography{reversible}
\bibliographystyle{splncs04}
\end{document}